\documentclass{article}
\usepackage{amsfonts}
\usepackage{mathrsfs}
\usepackage{amsthm}
\usepackage{amssymb}
\usepackage{amsmath}
\usepackage{enumerate}
\usepackage{braket}
\usepackage{graphicx}
\usepackage{float}

\theoremstyle{plain}
\newtheorem{theorem}{Theorem}[section]
\newtheorem{proposition}[theorem]{Proposition}
\newtheorem{corollary}[theorem]{Corollary}
\newtheorem{lemma}{Lemma}[section]
\theoremstyle{definition}
\newtheorem{definition}{Definition}[section]

\theoremstyle{remark}
\newtheorem{remark}{\textbf{Remark}}[section]

\theoremstyle{definition}

\usepackage{chngcntr}
\counterwithout{table}{section}

\numberwithin{equation}{section}

\title{Harmonic Reduction of Spin-Coupled Grover Walks on Hypercubes via Quantum Bernoulli Noises}
\author{Ce Wang\thanks{email: cewangwhu@163.com; orcid:0000-0002-2916-8478} \\
Shanghai Institute for Mathematics and Interdisciplinary Sciences (SIMIS)\\
Shanghai 200433, China}
\date{}

\begin{document}
\maketitle

\begin{abstract}
We consider spin-coupled Grover walks on hypercubes within the
framework of quantum Bernoulli noises (QBN).  We identify
parity-dependent harmonic subspaces generated by the central Hamming
layers: in even dimension they give a fixed-point subspace, whereas in
odd dimension the walk reduces on the harmonic subspace to an explicit
eight-dimensional reduced dynamics.  Using symmetric-group
representation theory, we characterize the corresponding isotypic
structure and determine the reduced spectrum.  The harmonic reduction
also yields explicit long-time probability behavior and shows that
the resulting states may occupy an asymptotically vanishing fraction
of the hypercube while still spanning its full Hamming diameter.
\end{abstract}

\vskip 2mm

\noindent\textbf{Keywords.}\ \ quantum Bernoulli noises;
spin-coupled Grover walk;
harmonic structure on the hypercube;
reduced dynamics;
symmetric-group representations
\vskip 2mm

\noindent\textbf{Mathematics Subject Classification (2020).}\ \ 81Q35, 81S25, 20C30

\section{Introduction}
\label{sec:introduction}

Quantum walks are quantum analogues of classical random walks and
provide a general framework for quantum dynamics on graphs, with
applications in quantum computation, quantum algorithms, and
mathematical physics.  Since their early introduction in
\cite{Aharonov1993}, they have been shown to provide a universal model
of quantum computation \cite{Childs2009} and exponential speedups for
certain algorithmic problems \cite{ChildsEtAl2003}.  Among the graph
geometries studied in this context, the hypercube is particularly
important because of its high symmetry and rich combinatorial and
spectral structure, with applications to mixing, hitting, quantum
search, and decoherence
\cite{MooreRussell,Kempe2005,ShenviKempeWhaley2003,AlagicRussell}.

Symmetry plays a central role in the analysis and reduction of quantum
walks on highly regular graphs.  On the hypercube, permutation
symmetry reduces the dynamics to variables determined by Hamming
weight or distance \cite{MooreRussell}, while more general
group-theoretic constructions lead to walks on quotient graphs
\cite{KroviBrun2007}.  Related approaches use association schemes and
spectral representations on Hamming graphs
\cite{GriffithsMano2026}, together with the classical raising--lowering
structure of the Boolean lattice and its relation to the Hamming and
Johnson schemes \cite{Feinsilver2012}.  The invariant structure of a
quantum walk also depends strongly on the choice of coin and on
additional internal degrees of freedom.  For Grover walks, spectral
degeneracies can produce nontrivial stationary subspaces
\cite{InuiKonishiKonno2004,KonnoObataSegawa2013}, while spin-coupled
walks incorporate internal variables that modify the effective walker
dynamics \cite{DanaciEtAl2021}.  Finite-dimensional invariant
reductions can likewise arise in particular symmetric settings; for
example, a coined walk on the complete bipartite graph admits an
eight-dimensional invariant reduction whose spectrum can be analyzed
explicitly \cite{BezerraSantosPortugal2024}.  In contrast with such
model-specific reductions, the present work seeks a systematic
description of the central invariant dynamics of the spin-coupled
hypercube walk in arbitrary dimension.

Quantum Bernoulli noises (QBN) are annihilation and creation operators
acting on Bernoulli functionals and satisfying the equal-time
canonical anti-commutation relations.  Their discrete-time theory was
developed in \cite{WangChaiLu2010}, quantum walks in terms of QBN were
introduced in \cite{WangYe2016}, and the QBN realization of hypercube
walks was developed in \cite{WangHypercube2022}.  The spin-coupled
Grover walk considered here was formulated in this framework in
\cite{WangSpectral2026}.  These developments naturally lead to the
question of whether the QBN operator algebra itself can reveal and
organize the invariant subspaces of the walk, rather than merely
provide an alternative representation of its evolution.

We show that the collective QBN annihilation and creation operators
answer this question through their harmonic kernels on the central
Hamming layers.  Their commutation relation reveals a parity-dependent
harmonic mechanism at the center of the hypercube.  For \(d=2m\), the
central harmonic space generates a three-layer fixed-point subspace
which, for \(m\ge2\), exhausts all fixed states with this support.  For
\(d=2m+1\), the collective creation operator induces an isometric
isomorphism between the two central harmonic spaces, and the
associated QBN lifts generate a four-layer invariant subspace whose
nontrivial dynamics is encoded by an explicit eight-dimensional
reduced matrix.  Thus a single QBN mechanism produces the distinct
even- and odd-dimensional reductions and applies uniformly in
arbitrary dimension.

The odd-dimensional construction also admits a precise
representation-theoretic characterization.  The central harmonic
kernel is isomorphic to the Specht module \(S^{(m+1,m)}\), and the
resulting four-layer harmonic subspace is the full corresponding
isotypic component.  The multiplicity \(8\), determined by the
symmetric-group branching rule, coincides exactly with the dimension
of the reduced dynamical factor.  Together with the
even-dimensional fixed-point completeness theorem, this shows that
the QBN construction identifies a complete parity-dependent reduction
of the central dynamics, rather than a particular low-dimensional
invariant sector.  We further determine the spectrum of the reduced
dynamics and derive consequences for its long-time and spatial
behavior: harmonic position distributions remain asymptotically
maximally separated from the uniform measure, while their supports may
occupy an asymptotically vanishing fraction of the hypercube and still
have maximal Hamming diameter.

The paper is organized as follows.  Section~\ref{sec:pre} recalls the
QBN framework and the spin-coupled Grover walk.
Section~\ref{sec:incidence-harmonic} develops the central harmonic
structure.  Section~\ref{sec:harmonic-subspaces} constructs the
parity-dependent invariant subspaces and derives the reduced dynamics,
representation-theoretic structure, and spectrum.
Section~\ref{sec:long-time-geometry} studies the resulting long-time
and spatial behavior, and the final section contains concluding
remarks.

\section{Preliminaries}
\label{sec:pre}
\subsection{QBN construction and hypercube realization}
\label{subsec:qbn-hypercube}
In this section, we recall the QBN construction and its realization on a
hypercube, and then formulate the spin-coupled Grover walk used in this
paper; see \cite{WangChaiLu2010,WangYe2016,WangHypercube2022} for
further details.

Let
\(
\Omega=\{-1,1\}^{\mathbb N}
\)
be the Bernoulli space with the canonical product measure \(\mathbb P\).
The space of square-integrable Bernoulli functionals is denoted by
\(
\mathfrak h=L^2(\Omega,\mathcal F,\mathbb P).
\)
Let \(\Gamma\) be the finite power set of \(\mathbb N\). It is known that
\(\mathfrak h\) has an orthonormal basis (ONB)
\(\{Z_\sigma\mid \sigma\in\Gamma\}\) of the form
\(Z_\emptyset=1\) and
\begin{equation*}
%\label{h_ONB}
Z_\sigma
=
\prod_{j\in\sigma} Z_j,
\qquad
\sigma\in\Gamma,\quad \sigma\neq\emptyset,
\end{equation*}
where \(Z=(Z_n)_{n\ge0}\) is a sequence of Bernoulli functionals.

For each \(k\in\mathbb N\), the annihilation operator \(\partial_k\)
and the creation operator \(\partial_k^*\) act on the ONB
\(\{Z_\sigma:\sigma\subset\mathbb N,\ |\sigma|<\infty\}\) as
\begin{equation}
\label{eq:QBN}
\partial_k Z_\sigma
=
\mathbf 1_\sigma(k)Z_{\sigma\setminus\{k\}},
\qquad
\partial_k^* Z_\sigma
=
\bigl(1-\mathbf 1_\sigma(k)\bigr)Z_{\sigma\cup\{k\}}.
\end{equation}
For each \(k\), these operators satisfy the equal-time canonical
anti-commutation relations (CAR)
\begin{equation*}
%\label{eq:CAR}
\partial_k^2=(\partial_k^*)^2=0,
\qquad
\partial_k\partial_k^*
+
\partial_k^*\partial_k
=
I.
\end{equation*}
For distinct coordinates they satisfy
\[
[\partial_j,\partial_k]
=
[\partial_j^*,\partial_k^*]
=
[\partial_j,\partial_k^*]
=
0,
\qquad
j\neq k.
\]
The operator family
\(
\{\partial_k,\partial_k^*:k\in\mathbb N\}
\)
is referred to as quantum Bernoulli noises (QBN).

We denote by \(\mathfrak h_n\) the subspace of \(\mathfrak h\) spanned by
\(\{Z_\sigma:\sigma\in\Gamma_n\}\),
where \(\Gamma_n\) is the power set of
\(
\mathbb N_n=\{0,1,\ldots,n\}.
\)
Since \(\Gamma_n\subset\Gamma\) and
\(|\Gamma_n|=2^{n+1}\), the space \(\mathfrak h_n\) is a
\(2^{n+1}\)-dimensional closed subspace of \(\mathfrak h\).
Moreover, for every \(k\in\mathbb N_n\), both
\(\partial_k\) and \(\partial_k^*\) leave \(\mathfrak h_n\) invariant.

We identify the vertex set of the \((n+1)\)-dimensional hypercube
with \(\Gamma_n\): two vertices \(\sigma,\tau\in\Gamma_n\) are adjacent
whenever
\[
|\sigma\triangle\tau|=1.
\]
By \eqref{eq:QBN}, for every \(k\in\mathbb N_n\),
\[
(\partial_k^*+\partial_k)Z_\sigma
=
Z_{\sigma\triangle\{k\}}.
\]
Thus
\begin{equation}
\label{shift_operator}
\Xi_k
:=
\partial_k+\partial_k^*,
\qquad
k\in\mathbb N_n,
\end{equation}
acts as the coordinate-\(k\) shift. Each \(\Xi_k\) is a self-adjoint
unitary, and the family
\(\{\Xi_k:k\in\mathbb N_n\}\) is commuting; see
\cite{WangHypercube2022} for the QBN realization of the hypercube.

Let
\(
\mathcal K
\) be a finite dimensional Hilbert space.
A family
\[
\mathcal C=\{C_k:0\le k\le n\}
\subset\mathcal B(\mathcal K)
\]
is called a coin operator system if
\(
\sum_{k=0}^n C_k
\)
is unitary and
\[
C_j^*C_k=C_jC_k^*=0,
\qquad
j\ne k.
\]
It follows that
\[
\sum_{k=0}^n C_k^*C_k
=
\sum_{k=0}^n C_kC_k^*
=
I_{\mathcal K}.
\]
A coin operator system admits the following equivalent representation.
There exist a unitary operator \(U\) on \(\mathcal K\) and a resolution
of the identity
\[
\{P_k:0\le k\le n\},
\]
such that
\begin{equation}
C_k=P_kU,
\qquad
0\le k\le n.
\label{eq:coin-system}
\end{equation}
Indeed, one may take
\[
U=\sum_{k=0}^n C_k,
\qquad
P_k=C_kC_k^*.
\]
Conversely, every family of the form \(C_k=P_kU\) is a coin operator
system.
The corresponding QBN hypercube walk is
\begin{equation}
W
=
\sum_{k=0}^{n}
\Xi_k\otimes C_k
\label{eq:qbn-hypercube-walk}
\end{equation}
on
\(\mathfrak h_n\otimes\mathcal K\).
It is unitary; see \cite{WangHypercube2022}.

\subsection{Spin-coupled Grover walk and eigenvalue equation}
\label{sec:model}

We next recall the spin-coupled walk. The construction is motivated by
the conserved-spin mechanism for disorder-free localization
\cite{DanaciEtAl2021} and was formulated mathematically in the QBN
framework in \cite{WangSpectral2026}.

Let \(d\ge2\), and let
\(\mathfrak h_{d-1}\) be the position Hilbert space of the
\(d\)-dimensional hypercube.  From now on we take
\(
\mathcal K
=
\mathbb C^d
\)
as the coin space, with orthonormal basis
\(
\{e_0,\ldots,e_{d-1}\}.
\)
Set
\(
M=2^d-1.
\)
Since
\[
|\mathbb N_M|
=
M+1
=
2^d
=
|\Gamma_{d-1}|,
\]
fix a bijection
\[
o:
\Gamma_{d-1}
\longrightarrow
\mathbb N_M.
\]
Let \(\mathfrak h_M\) be the spin Hilbert space with orthonormal basis
\[
\{Z_\gamma:\gamma\subseteq\mathbb N_M\}.
\]
For each \(\sigma\in\Gamma_{d-1}\), define the spin-flip operator by
\[
\Pi_\sigma^{\mathrm{spin}}Z_\gamma
=
Z_{\gamma\triangle\{o(\sigma)\}},
\qquad
\gamma\subseteq\mathbb N_M.
\]
The total Hilbert space is
\[
\mathcal H
=
\mathfrak h_{d-1}
\otimes\mathcal K
\otimes\mathfrak h_M.
\]

The position-dependent spin rotation is
\[
R
=
\sum_{\sigma\in\Gamma_{d-1}}
|Z_\sigma\rangle\langle Z_\sigma|
\otimes I_{\mathcal K}
\otimes
e^{-i\phi_\sigma\Pi_\sigma^{\mathrm{spin}}},
\qquad
\phi_\sigma
=
\frac{|\sigma|\pi}{d}.
\]
The full evolution operator on \(\mathcal H\) is
\[
\widetilde W
=
\left(
\sum_{k=0}^{d-1}
\Xi_k\otimes C_k\otimes I_{\mathfrak h_M}
\right)R.
\]

The spin-flip operators commute and are self-adjoint unitaries.
Hence they admit a common orthonormal eigenbasis indexed by
\[
\mathbf s
=
(s_\sigma)_{\sigma\in\Gamma_{d-1}}
\in
\{\pm1\}^{\Gamma_{d-1}}.
\]
Write the corresponding common eigenvector as
\(\mathcal S(\mathbf s)\), so that
\[
\Pi_\sigma^{\mathrm{spin}}
\mathcal S(\mathbf s)
=
s_\sigma\mathcal S(\mathbf s).
\]
For each \(\mathbf s\), the subspace
\[
\mathcal H_{\mathbf s}
=
\mathfrak h_{d-1}
\otimes\mathcal K
\otimes
\operatorname{span}\{\mathcal S(\mathbf s)\}
\]
is invariant under \(\widetilde W\).

For
\(
\mathbf s\in\{\pm1\}^{\Gamma_{d-1}}
\)
and
\(
\sigma\in\Gamma_{d-1},
\)
define the operator family
\[
\mathfrak C_{\mathbf s}^{(\sigma)}
=
\{C_{\mathbf s,k}^{(\sigma)}:k=0,\ldots,d-1\}
\]
by
\[
C_{\mathbf s,k}^{(\sigma)}
:=
e^{-i|\sigma|\pi s_\sigma/d}\,C_k.
\]
Since the common phase factor has unit modulus,
\(\mathfrak C_{\mathbf s}^{(\sigma)}\) is again a coin operator system.

On the invariant subspace \(\mathcal H_{\mathbf s}\), the full evolution
reduces to the position-dependent quantum walk
\begin{equation}
\label{eq:W_s_reduced}
\widetilde W_{\mathbf s}
=
\sum_{k=0}^{d-1}
\sum_{\sigma\in\Gamma_{d-1}}
|Z_{\sigma\triangle\{k\}}\rangle
\langle Z_\sigma|
\otimes
C_{\mathbf s,k}^{(\sigma)}.
\end{equation}
Hence a vector
\[
\Phi
=
\sum_{\sigma\in\Gamma_{d-1}}
Z_\sigma\otimes u_\sigma
\]
is a fixed point of \(\widetilde W_{\mathbf s}\) if and only if
\begin{equation}
\label{eq:fp-general}
u_\sigma
=
\sum_{k=0}^{d-1}
e^{-i|\sigma\triangle\{k\}|\pi
s_{\sigma\triangle\{k\}}/d}
C_k
u_{\sigma\triangle\{k\}},
\qquad
\sigma\in\Gamma_{d-1}.
\end{equation}

For the Grover coin system, let
\[
C_k=P_kG,
\qquad
k=0,1,\ldots,d-1,
\]
where
\[
G
=
2|u_d\rangle\langle u_d|-I_{\mathcal K},
\qquad
u_d
=
\frac{1}{\sqrt d}
\sum_{k=0}^{d-1}e_k,
\]
and
\[
P_k
=
|e_k\rangle\langle e_k|
\]
is the orthogonal projection onto the \(k\)-th coin direction
\(\operatorname{span}\{e_k\}\).

Let
\[
\Phi
=
\sum_{\sigma\in\Gamma_{d-1}}
Z_\sigma\otimes u_\sigma,
\qquad
u_\sigma
=
\sum_{k=0}^{d-1}u_\sigma^{(k)}e_k.
\]
Then \(\Phi\) is an eigenvector of
\(\widetilde W_{\mathbf s}\) with eigenvalue
\(\lambda\in\mathbb C\) if and only if
\begin{equation}
\lambda u_\sigma
=
\sum_{k=0}^{d-1}
e^{-i|\sigma\triangle\{k\}|\pi
s_{\sigma\triangle\{k\}}/d}
C_k
u_{\sigma\triangle\{k\}},
\qquad
\sigma\in\Gamma_{d-1}.
\label{eq:eig-general}
\end{equation}
Since \(C_k=P_kG\), the \(k\)-th coin component satisfies
\begin{equation}
\lambda u_\sigma^{(k)}
=
e^{-i|\sigma\triangle\{k\}|\pi
s_{\sigma\triangle\{k\}}/d}
\bigl(
G u_{\sigma\triangle\{k\}}
\bigr)^{(k)},
\label{eq:eig-component-general}
\end{equation}
for
\(
k=0,\ldots,d-1,
\sigma\in\Gamma_{d-1}.
\)
In particular, the fixed-point equation is obtained from
\eqref{eq:eig-general} by setting \(\lambda=1\).
%%%%%%%%%%%%%%%%%%%%%%%%%%%%%%%%%%%%%%%%%%%%%%%%%%%%%%%%%%%%%%%%%%%

\section{QBN harmonic structure on the hypercube}
\label{sec:incidence-harmonic}

In this section, we formulate the harmonic structure of the hypercube
in terms of QBN.  We first introduce the number operator and the
collective annihilation and creation operators, and then examine their
behavior on the central Hamming layers.

For \(0\le r\le d\), define the \(r\)-th Hamming layer by
\[
\Omega_r
:=
\{\sigma\in\Gamma_{d-1}:|\sigma|=r\},
\]
and let
\[
V_r
:=
\operatorname{span}\{Z_\sigma:\sigma\in\Omega_r\}
\subset\mathfrak h_{d-1}.
\]

Define the number operator by
\[
\mathcal N
:=
\sum_{k=0}^{d-1}\partial_k^*\partial_k.
\]
By \eqref{eq:QBN},
\[
\mathcal N Z_\sigma
=
|\sigma|Z_\sigma.
\]
Hence the Hamming-layer decomposition is precisely the spectral
decomposition of \(\mathcal N\):
\[
\mathfrak h_{d-1}
=
\bigoplus_{r=0}^{d}V_r,
\qquad
V_r
=
\ker(\mathcal N-rI).
\]
Define the collective annihilation and creation operators by
\[
A^-
:=
\sum_{k=0}^{d-1}\partial_k,
\qquad
A^+
:=
\sum_{k=0}^{d-1}\partial_k^*.
\]
They connect adjacent Hamming layers:
\[
A^-V_r\subseteq V_{r-1},
\qquad
A^+V_r\subseteq V_{r+1},
\]
with the convention
\(
V_{-1}=V_{d+1}=\{0\}.
\)
Moreover,
\[
(A^-)^*=A^+.
\]

The basic algebraic relation between the collective operators and the
number operator is the following.

\begin{proposition}
\label{thm:sl2-qbn}
On \(\mathfrak h_{d-1}\),
\begin{equation}
A^-A^+-A^+A^-
=
dI-2\mathcal N.
\label{eq:qbn-commutator-global}
\end{equation}
In particular, on \(V_r\),
\begin{equation}
\left.
\bigl(A^-A^+-A^+A^-\bigr)
\right|_{V_r}
=
(d-2r)I_{V_r}.
\label{eq:qbn-commutator-layer}
\end{equation}
\end{proposition}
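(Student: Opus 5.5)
The plan is to expand both products using bilinearity and sort the resulting double sums into diagonal and off-diagonal terms. Write
\[
A^-A^+-A^+A^-=\sum_{j,k=0}^{d-1}\bigl(\partial_j\partial_k^*-\partial_k^*\partial_j\bigr)=\sum_{j,k=0}^{d-1}[\partial_j,\partial_k^*].
\]
For $j\neq k$ the commutation relations recalled in Section~\ref{subsec:qbn-hypercube} give $[\partial_j,\partial_k^*]=0$. Hence only the diagonal terms $j=k$ survive, and the commutator equals $\sum_{k=0}^{d-1}(\partial_k\partial_k^*-\partial_k^*\partial_k)$.

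For the diagonal terms, the equal-time CAR gives $\partial_k\partial_k^*=I-\partial_k^*\partial_k$. Therefore each summand is $I-2\partial_k^*\partial_k$. Summing over the $d$ indices yields $dI-2\sum_k\partial_k^*\partial_k=dI-2\mathcal N$, which is \eqref{eq:qbn-commutator-global}.

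One point needs care: all these relations must be used as operators on $\mathfrak h_{d-1}$, i.e.\ with $n=d-1$, where $k$ ranges over $\mathbb N_{d-1}$. This is fine because each $\partial_k$ and $\partial_k^*$ leaves $\mathfrak h_{d-1}$ invariant, so the restrictions obey the same CAR and commutation relations. As a sanity check one can instead verify the identity directly on basis vectors $Z_\sigma$ using \eqref{eq:QBN}. The diagonal part contributes $\sum_k(\mathbf 1_{\sigma^c}(k)-\mathbf 1_\sigma(k))Z_\sigma=(d-2|\sigma|)Z_\sigma$, and the off-diagonal terms $Z_{(\sigma\cup\{k\})\setminus\{j\}}$ with $j\in\sigma$, $k\notin\sigma$ appear identically in both products and cancel.

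The layer statement \eqref{eq:qbn-commutator-layer} then follows immediately, since $\mathcal N=rI$ on $V_r=\ker(\mathcal N-rI)$. I do not expect any real obstacle. The only step deserving attention is the cancellation of the off-diagonal terms, which relies on the mixed relation $[\partial_j,\partial_k^*]=0$ for distinct indices, i.e.\ on QBN being commuting rather than anticommuting across sites.
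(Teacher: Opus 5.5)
Your proposal is correct and follows the paper's proof: you expand into the double sum $\sum_{j,k}[\partial_j,\partial_k^*]$, drop the off-diagonal terms using $[\partial_j,\partial_k^*]=0$ for $j\neq k$, rewrite each diagonal term as $I-2\partial_k^*\partial_k$ via the CAR, and then use $\mathcal N=rI$ on $V_r$. Writing the sum directly as commutators $[\partial_j,\partial_k^*]$ makes the paper's ``cancel pairwise'' step explicit, and your basis-vector check on $Z_\sigma$ is also correct.
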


\begin{proof}
Expanding the left-hand side of
\eqref{eq:qbn-commutator-global} gives
\[
\sum_{j,k=0}^{d-1}
\left(
\partial_j\partial_k^*
-
\partial_j^*\partial_k
\right).
\]
For \(j\neq k\), the QBN operators associated with distinct coordinates
commute, so the corresponding terms cancel pairwise.  Hence only the
diagonal terms remain:
\[
A^-A^+-A^+A^-
=
\sum_{k=0}^{d-1}
\left(
\partial_k\partial_k^*
-
\partial_k^*\partial_k
\right).
\]
By the equal-time CAR,
\[
\partial_k\partial_k^*
-
\partial_k^*\partial_k
=
I-2\partial_k^*\partial_k.
\]
Summing over \(k\) yields
\[
A^-A^+-A^+A^-
=
dI
-
2\sum_{k=0}^{d-1}\partial_k^*\partial_k
=
dI-2\mathcal N.
\]
Since \(\mathcal N=rI\) on \(V_r\),
\eqref{eq:qbn-commutator-layer} follows.
\end{proof}

\begin{remark}
Equation~\eqref{eq:qbn-commutator-global} already contains the
even--odd distinction at the center of the hypercube.  If \(d=2m\),
the commutator vanishes on the unique central number subspace \(V_m\).
If \(d=2m+1\), it equals \(I\) on \(V_m\) and \(-I\) on
\(V_{m+1}\).  The central QBN harmonic structures below arise directly
from these two regimes.
\end{remark}

\begin{corollary}
\label{cor:norm-qbn}
For every \(f\in\mathfrak h_{d-1}\),
\begin{equation}
\|A^+f\|^2-\|A^-f\|^2
=
d\|f\|^2
-
2\langle f,\mathcal N f\rangle.
\label{eq:norm-qbn-global-expanded}
\end{equation}
In particular, if \(f\in V_r\), then
\begin{equation}
\|A^+f\|^2-\|A^-f\|^2
=
(d-2r)\|f\|^2.
\label{eq:norm-qbn}
\end{equation}
\end{corollary}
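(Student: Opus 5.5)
The plan is to take the inner product of the operator identity \eqref{eq:qbn-commutator-global} of Proposition~\ref{thm:sl2-qbn} with \(f\) and rewrite each side using the adjoint relation \((A^-)^*=A^+\). Everything takes place in the finite-dimensional space \(\mathfrak h_{d-1}\), and all the operators involved are bounded, so there are no domain issues.

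First, for arbitrary \(f\in\mathfrak h_{d-1}\), compute
\[
\langle f,A^-A^+f\rangle=\langle (A^-)^*f,A^+f\rangle=\langle A^+f,A^+f\rangle=\|A^+f\|^2.
\]
Similarly,
\[
\langle f,A^+A^-f\rangle=\langle (A^+)^*f,A^-f\rangle=\|A^-f\|^2,
\]
where \((A^+)^*=A^-\) follows by taking adjoints in \((A^-)^*=A^+\). Pairing both sides of \eqref{eq:qbn-commutator-global} with \(f\) then gives
\[
\|A^+f\|^2-\|A^-f\|^2=\langle f,(dI-2\mathcal N)f\rangle=d\|f\|^2-2\langle f,\mathcal Nf\rangle,
\]
which is \eqref{eq:norm-qbn-global-expanded}.

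For the layer statement, take \(f\in V_r\). Then \(\mathcal Nf=rf\), so \(\langle f,\mathcal Nf\rangle=r\|f\|^2\), and substituting this into \eqref{eq:norm-qbn-global-expanded} yields \eqref{eq:norm-qbn}. Equivalently, one can pair \eqref{eq:qbn-commutator-layer} directly with \(f\).

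No step here is a genuine obstacle. The only point needing care is the adjoint relation \((A^-)^*=A^+\), which the paper states. If it needed justification, it follows from \eqref{eq:QBN}: for each \(k\), \(\langle Z_\tau,\partial_kZ_\sigma\rangle\) and \(\langle\partial_k^*Z_\tau,Z_\sigma\rangle\) are both equal to \(1\) exactly when \(\sigma=\tau\cup\{k\}\) with \(k\notin\tau\), and to \(0\) otherwise. Hence \(\partial_k^*\) is the adjoint of \(\partial_k\), and summing over \(k\) gives the claim.
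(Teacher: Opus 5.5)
Your proposal is correct and matches the paper's own argument: pair the commutator identity \eqref{eq:qbn-commutator-global} with \(f\), use \((A^-)^*=A^+\) to identify \(\langle f,A^-A^+f\rangle=\|A^+f\|^2\) and \(\langle f,A^+A^-f\rangle=\|A^-f\|^2\), and then substitute \(\mathcal Nf=rf\) on \(V_r\). Your extra check of \(\partial_k^*=(\partial_k)^*\) on the basis \(\{Z_\sigma\}\) is also correct.
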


%\begin{proof}
%Since \((A^-)^*=A^+\),
%\[
%\langle f,A^-A^+f\rangle
%=
%\|A^+f\|^2,
%\qquad
%\langle f,A^+A^-f\rangle
%=
%\|A^-f\|^2.
%\]
%Taking the inner product of
%\eqref{eq:qbn-commutator-global} with \(f\) gives
%\[
%\|A^+f\|^2-\|A^-f\|^2
%=
%\left\langle
%f,(dI-2\mathcal N)f
%\right\rangle,
%\]
%which proves \eqref{eq:norm-qbn-global-expanded}.  If \(f\in V_r\), then
%\(\mathcal Nf=rf\), and \eqref{eq:norm-qbn} follows.
%\end{proof}

\begin{corollary}
\label{cor:qbn-ladder-estimates}
Let \(0\le r\le d\).

\begin{enumerate}[(i)]
\item
If \(r<d/2\), then for every \(f\in V_r\),
\begin{equation}
\|A^+f\|^2
\ge
(d-2r)\|f\|^2.
\label{eq:lower-ladder-estimate}
\end{equation}
In particular,
\[
A^+\big|_{V_r}:V_r\longrightarrow V_{r+1}
\]
is injective, and 
\[
A^-\big|_{V_{r+1}}:V_{r+1}\longrightarrow V_r
\]
is surjective.

\item
If \(r>d/2\), then for every \(f\in V_r\),
\begin{equation}
\|A^-f\|^2
\ge
(2r-d)\|f\|^2.
\label{eq:upper-ladder-estimate}
\end{equation}
In particular,
\[
A^-\big|_{V_r}:V_r\longrightarrow V_{r-1}
\]
is injective, and
\[
A^+\big|_{V_{r-1}}:V_{r-1}\longrightarrow V_r
\]
is surjective.
\end{enumerate}
\end{corollary}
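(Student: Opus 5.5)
The plan is to read both estimates directly off Corollary~\ref{cor:norm-qbn}, by dropping a nonnegative norm term in \eqref{eq:norm-qbn}. The injectivity and surjectivity claims will then follow from finite-dimensional linear algebra together with the adjoint relation \((A^-)^*=A^+\).

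For (i), fix \(r<d/2\) and \(f\in V_r\). Equation \eqref{eq:norm-qbn} gives
\[
\|A^+f\|^2=(d-2r)\|f\|^2+\|A^-f\|^2\ge (d-2r)\|f\|^2,
\]
which is \eqref{eq:lower-ladder-estimate}. Since \(d-2r>0\), \(A^+f=0\) forces \(f=0\), so \(A^+|_{V_r}\) is injective. For surjectivity of \(A^-|_{V_{r+1}}\), I would note that the restriction \(A^-|_{V_{r+1}}:V_{r+1}\to V_r\) is the Hilbert-space adjoint of \(A^+|_{V_r}:V_r\to V_{r+1}\). Indeed, for \(f\in V_r\) and \(g\in V_{r+1}\) we have \(\langle A^+f,g\rangle=\langle f,A^-g\rangle\), and both maps respect the layer decomposition. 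In finite dimensions, the range of the adjoint is the orthogonal complement of the kernel of the original map:
\[
\operatorname{ran}\bigl(A^-|_{V_{r+1}}\bigr)=\bigl(\ker A^+|_{V_r}\bigr)^{\perp}=V_r .
\]

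For (ii), fix \(r>d/2\) and \(f\in V_r\). Rearranging \eqref{eq:norm-qbn} gives
\[
\|A^-f\|^2=(2r-d)\|f\|^2+\|A^+f\|^2\ge(2r-d)\|f\|^2 .
\]
Since \(2r-d>0\), \(A^-|_{V_r}\) is injective. By the same adjoint argument, now with \(A^+|_{V_{r-1}}=(A^-|_{V_r})^*\), the map \(A^+|_{V_{r-1}}:V_{r-1}\to V_r\) is surjective.

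The only point needing care is checking that the restricted operators really are mutual adjoints between the specified layers. This holds because the layers \(V_r\) are mutually orthogonal and \(A^\pm\) shift the layer index by \(\pm1\). The boundary cases \(r=0\) in (i) and \(r=d\) in (ii) need no separate treatment: there \(V_{-1}\) or \(V_{d+1}\) is \(\{0\}\), and the estimates still hold.
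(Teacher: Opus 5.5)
Your proposal is correct and follows essentially the same route as the paper. Both estimates come from the identity \eqref{eq:norm-qbn} by discarding the nonnegative term, injectivity follows from positivity of $d-2r$ (resp.\ $2r-d$), and surjectivity follows from the adjoint relation $(A^+|_{V_r})^*=A^-|_{V_{r+1}}$ (resp.\ $(A^-|_{V_r})^*=A^+|_{V_{r-1}}$) in finite dimensions; your explicit check that these restricted maps are adjoints between the indicated layers spells out a step the paper uses without comment.
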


\begin{proof}
If \(r<d/2\), then \eqref{eq:norm-qbn} gives
\[
\|A^+f\|^2
=
\|A^-f\|^2
+
(d-2r)\|f\|^2,
\]
and hence
\[
\|A^+f\|^2
\ge
(d-2r)\|f\|^2.
\]
Since \(d-2r>0\), the restriction of \(A^+\) to \(V_r\) is injective.
Because
\[
\left(A^+\big|_{V_r}\right)^*
=
A^-\big|_{V_{r+1}},
\]
and the spaces are finite-dimensional, the latter map is surjective.

If \(r>d/2\), then
\[
\|A^-f\|^2
=
\|A^+f\|^2
+
(2r-d)\|f\|^2,
\]
which gives \eqref{eq:upper-ladder-estimate}.  The injectivity and
surjectivity statements follow in the same way.
\end{proof}

We now define harmonicity in terms of the collective QBN
operators.

\begin{definition}
\label{def:qbn-harmonic}
For \(0\le r\le d\), define the lower and upper harmonic subspaces by
\begin{equation}
\mathscr H_r^-
:=
V_r\cap\ker A^-,
\qquad
\mathscr H_r^+
:=
V_r\cap\ker A^+.
\label{eq:central-harmonic-spaces}
\end{equation}
Vectors in \(\mathscr H_r^-\) and \(\mathscr H_r^+\) are called lower
and upper harmonic vectors, respectively.
\end{definition}

Under the standard identification of \(V_r\) with functions on the
Boolean \(r\)-slice, the lower QBN harmonic condition
\[
A^-f=0
\]
is the usual harmonicity condition on the slice; see
\cite{Filmus2016}.  In what follows, however, the QBN annihilation and
creation conditions in \eqref{eq:central-harmonic-spaces} will be used
directly.

For later use, set
\begin{equation}
q_-
:=
\left\lfloor\frac d2\right\rfloor,
\qquad
q_+
:=
\left\lceil\frac d2\right\rceil.
\label{eq:central-indices}
\end{equation}

The following theorem describes the harmonic structure.

\begin{theorem}
\label{thm:harmonic-qbn}
Let \(d\ge2\).

\medskip
\noindent
{\rm (i)}
If \(d=2m\), then
\begin{equation}
\mathscr H_m^-
=
\mathscr H_m^+.
\label{eq:even-central-harmonic}
\end{equation}
\medskip
\noindent
{\rm (ii)}
If \(d=2m+1\), then
\begin{equation}
A^+:
\mathscr H_m^-
\longrightarrow
\mathscr H_{m+1}^+
\label{eq:odd-harmonic-map}
\end{equation}
is an isometric isomorphism whose inverse is the restriction of \(A^-\).
In particular, if
\[
f\in\mathscr H_m^-,
\qquad
g:=A^+f,
\]
then
\begin{equation}
A^-f=0,
\qquad
A^+f=g,
\qquad
A^-g=f,
\qquad
A^+g=0,
\qquad
\|g\|=\|f\|.
\label{eq:odd-harmonic-transfer}
\end{equation}
\end{theorem}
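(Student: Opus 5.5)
The whole argument rests on the commutator identity of Proposition~\ref{thm:sl2-qbn} and the norm identity \eqref{eq:norm-qbn} of Corollary~\ref{cor:norm-qbn}, applied on the central layers.

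For part (i), let \(d=2m\) and \(f\in V_m\). Then \eqref{eq:norm-qbn} gives \(\|A^+f\|^2=\|A^-f\|^2\), since \(d-2m=0\). Hence \(A^-f=0\) if and only if \(A^+f=0\), which is exactly \(\mathscr H_m^-=\mathscr H_m^+\).

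For part (ii), let \(d=2m+1\) and \(f\in\mathscr H_m^-\), and set \(g=A^+f\in V_{m+1}\). I will verify the four relations in \eqref{eq:odd-harmonic-transfer} in turn.

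\emph{Isometry.} By \eqref{eq:norm-qbn} with \(r=m\), \(\|A^+f\|^2=\|A^-f\|^2+\|f\|^2=\|f\|^2\). So \(A^+\) is isometric on \(\mathscr H_m^-\).

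\emph{Inverse.} Evaluating \eqref{eq:qbn-commutator-layer} on \(V_m\) gives \(A^-A^+f=A^+A^-f+f=f\), that is, \(A^-g=f\).

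\emph{The image is upper harmonic.} On \(V_{m+1}\) the commutator equals \(-I\). Applying it to \(g\) gives \(A^-A^+g-A^+A^-g=-g\). Since \(A^+A^-g=A^+f=g\), this forces \(A^-A^+g=0\). Then \(\|A^+g\|^2=\langle g,A^-A^+g\rangle=0\), so \(A^+g=0\) and \(g\in\mathscr H_{m+1}^+\). (Equivalently, \eqref{eq:norm-qbn} at \(r=m+1\) gives \(\|A^+g\|^2=\|A^-g\|^2-\|g\|^2=\|f\|^2-\|f\|^2=0\).)

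\emph{Surjectivity.} Let \(h\in\mathscr H_{m+1}^+\). The commutator on \(V_{m+1}\) gives \(A^+A^-h=A^-A^+h+h=h\). Put \(f'=A^-h\in V_m\). Then \(A^+f'=h\), and \(A^-f'=A^-A^-h\). The last step, which I expect to be the only real point, is to show \(A^-f'=0\). I use the commutator on \(V_m\): \(A^-A^+f'-A^+A^-f'=f'\). Substituting \(A^+f'=h\) and \(A^-h=f'\) gives \(A^-A^+f'=A^-h=f'\), hence \(A^+A^-f'=0\). Therefore \(\|A^-f'\|^2=\langle f',A^+A^-f'\rangle=0\), so \(A^-f'=0\) and \(f'\in\mathscr H_m^-\).

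Together these show that \(A^+\colon\mathscr H_m^-\to\mathscr H_{m+1}^+\) is an isometric bijection whose inverse is the restriction of \(A^-\). All four relations in \eqref{eq:odd-harmonic-transfer} have been obtained along the way.
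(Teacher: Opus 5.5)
Your proposal is correct and follows the paper's proof essentially step for step: part (i) comes from the norm identity \eqref{eq:norm-qbn} at $r=m$, and part (ii) from the commutator on $V_m$ and $V_{m+1}$, applied in both directions. The only difference is cosmetic: where the paper invokes the norm identity to get $A^+g=0$ and $A^-f'=0$, you pair the commutator with the vector directly (e.g.\ $\|A^+g\|^2=\langle g,A^-A^+g\rangle=0$), which is the same computation.
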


\begin{proof}
Let first \(d=2m\).  For \(f\in V_m\),
\eqref{eq:norm-qbn} gives
\[
\|A^+f\|^2-\|A^-f\|^2=0.
\]
Hence
\(
\|A^+f\|=\|A^-f\|,
\)
and therefore
\[
A^-f=0
\quad\Longleftrightarrow\quad
A^+f=0.
\]
This proves \eqref{eq:even-central-harmonic}.

Now let \(d=2m+1\) and
\(f\in\mathscr H_m^-\).  Since
\(
A^-f=0,
\)
the commutator identity
\eqref{eq:qbn-commutator-global} restricted to \(V_m\) gives
\[
A^-A^+f=f.
%\label{eq:odd-reconstruction}
\]
Moreover, \eqref{eq:norm-qbn} gives
\[
\|A^+f\|=\|f\|.
%\label{eq:odd-isometry}
\]

Set
\(
g:=A^+f\in V_{m+1}.
\) Then
\(
A^-g=f.
\)
Applying \eqref{eq:norm-qbn} to \(g\in V_{m+1}\), we obtain
\[
\|A^+g\|^2-\|A^-g\|^2
=
-\|g\|^2.
\]
Using
\(
\|A^-g\|=\|f\|=\|g\|,
\)
we conclude that
\(
A^+g=0.
\)
Thus
\(
g\in\mathscr H_{m+1}^+,
\)
and \(A^+\) defines an isometry
\[
\mathscr H_m^-
\longrightarrow
\mathscr H_{m+1}^+.
\]
Conversely, let
\(
g\in\mathscr H_{m+1}^+.
\)
Since \(A^+g=0\), the commutator identity on \(V_{m+1}\) gives
\[
A^+A^-g=g.
\]
Set
\(
f:=A^-g.
\)
Then \(A^+f=g\).  Applying \eqref{eq:norm-qbn} to \(g\) gives
\(
\|f\|=\|g\|.
\)
Applying it to \(f\in V_m\), together with
\(\|A^+f\|=\|g\|=\|f\|\), yields
\[
A^-f=0.
\]
Hence
\(
f\in\mathscr H_m^-.
\)
Therefore the restrictions of \(A^+\) and \(A^-\) are inverse
isometries.
\end{proof}

The theorem exhibits the parity-dependent structure of the central
harmonic subspaces.  The dimensions of the central harmonic spaces can also be obtained.

\begin{corollary}
\label{thm:dim-harmonic-qbn}
Let \(d\ge2\), and let
\[
C_n
:=
\frac{1}{n+1}\binom{2n}{n}
\label{eq:catalan-number}
\]
denote the \(n\)-th Catalan number.  Then
\begin{equation}
\dim\mathscr H_{q_-}^-
=
\dim\mathscr H_{q_+}^+
=
C_{q_+}.
\label{eq:dim-harmonic}
\end{equation}
\end{corollary}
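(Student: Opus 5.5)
The plan is a dimension count based on Corollary~\ref{cor:qbn-ladder-estimates} and Theorem~\ref{thm:harmonic-qbn}. For \(r\le q_-\) we have \(\mathscr H_r^-=V_r\cap\ker A^-=\ker\bigl(A^-|_{V_r}:V_r\to V_{r-1}\bigr)\). For \(1\le r\le q_-\) we have \(r-1<d/2\), so Corollary~\ref{cor:qbn-ladder-estimates}(i), applied with \(r-1\) in place of \(r\), shows that \(A^-|_{V_r}\) is surjective onto \(V_{r-1}\). Rank--nullity then gives
\[
\dim\mathscr H_r^-=\binom{d}{r}-\binom{d}{r-1}.
\]
For \(r=q_-\) this is
\[
\dim\mathscr H_{q_-}^-=\binom{d}{q_-}-\binom{d}{q_--1},
\]
and it remains to identify this with \(C_{q_+}\) in both parities.

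\emph{Odd case, \(d=2m+1\).} Here \(q_-=m\) and \(q_+=m+1\). The key step is to check
\[
\binom{2m+1}{m}-\binom{2m+1}{m-1}=\frac{1}{m+2}\binom{2m+2}{m+1}.
\]
This is a routine factorial computation: write both binomials over \(m!\,(m+2)!\) and simplify. The left-hand side becomes \(\frac{(2m+1)!\,((m+2)-m)}{m!\,(m+2)!}=\frac{2(2m+1)!}{m!\,(m+2)!}\). The right-hand side is \(\frac{(2m+2)!}{(m+1)!\,(m+2)!}\), which equals the same quantity after cancelling \(2m+2=2(m+1)\). For the upper space, Theorem~\ref{thm:harmonic-qbn}(ii) gives an isomorphism \(\mathscr H_m^-\cong\mathscr H_{m+1}^+\), so the two dimensions agree.

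\emph{Even case, \(d=2m\).} Here \(q_-=q_+=m\), and Theorem~\ref{thm:harmonic-qbn}(i) gives \(\mathscr H_m^-=\mathscr H_m^+\), so both dimensions are equal. We need
\[
\binom{2m}{m}-\binom{2m}{m-1}=\frac{1}{m+1}\binom{2m}{m}.
\]
This follows from \(\binom{2m}{m-1}=\frac{m}{m+1}\binom{2m}{m}\).

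The statement is conditioned on \(d\ge2\), so \(q_-\ge1\) and the surjectivity argument applies. This also avoids the degenerate layer \(r=0\), where \(V_{-1}=\{0\}\).

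The only substantive point is the surjectivity of \(A^-\) on the relevant layer. This is exactly Corollary~\ref{cor:qbn-ladder-estimates}(i), which requires \(q_--1<d/2\) and so holds in both parities. Everything after that is binomial arithmetic. As a side check, the upper-harmonic dimension could be computed directly in the same way using part~(ii) of that corollary together with the symmetry \(\binom{d}{r}=\binom{d}{d-r}\).
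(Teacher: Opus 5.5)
Your proposal is correct and follows the paper's proof: surjectivity of $A^-\colon V_{q_-}\to V_{q_--1}$ from the ladder estimates, then rank--nullity, then Theorem~\ref{thm:harmonic-qbn} for the equality of the lower and upper harmonic dimensions. The only difference is that you verify the identity $\binom{d}{q_-}-\binom{d}{q_--1}=C_{q_+}$ explicitly in both parities, which the paper leaves implicit, and your computations are correct.
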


\begin{proof}
By Proposition~\ref{cor:qbn-ladder-estimates},
\[
A^-:
V_{q_-}
\longrightarrow
V_{q_--1}
\]
is surjective.  Therefore, by rank--nullity,
\[
\dim\mathscr H_{q_-}^-
=
\dim V_{q_-}-\dim V_{q_--1}
=
\binom{d}{q_-}
-
\binom{d}{q_--1}.
\label{eq:central-kernel-difference}
\]
Thus
\[
\dim\mathscr H_{q_-}^-
=
C_{q_+}.
\]
The equality
\[
\dim\mathscr H_{q_+}^+
=
\dim\mathscr H_{q_-}^-
\]
follows from Theorem~\ref{thm:harmonic-qbn}.
\end{proof}

%%%%%%%%%%%%%%%%%%%%%%%%%%%%%%%%%%%%%%%%%%%%%%%%%%%%%%%%%%%%%%%%%%%
\section{Parity-dependent harmonic dynamics}
\label{sec:harmonic-subspaces}
%%%%%%%%%%%%%%%%%%%%%%%%%%%%%%%%%%%%%%%%%%%%%%%%%%%%%%%%%%%%%%%%%%%

In this section, we construct invariant position--coin states from the
harmonic vectors obtained above.  The construction is based on four
canonical QBN lifts:
\[
\mathsf L_{\downarrow},
\mathsf L_{\uparrow},
\mathsf L^-,
\mathsf L^+
:
\mathfrak h_{d-1}
\longrightarrow
\mathfrak h_{d-1}\otimes\mathcal K.
\]
They are defined by
\begin{equation}
\begin{aligned}
\mathsf L_{\downarrow}h
&:=
\sum_{k=0}^{d-1}\partial_kh\otimes e_k,
&
\mathsf L_{\uparrow}h
&:=
\sum_{k=0}^{d-1}\partial_k^*h\otimes e_k,
\\
\mathsf L^-h
&:=
\sum_{k=0}^{d-1}N_kh\otimes e_k,
&
\mathsf L^+h
&:=
\sum_{k=0}^{d-1}\overline N_kh\otimes e_k,
\end{aligned}
\qquad
h\in\mathfrak h_{d-1},
\label{eq:qbn-coin-lifts}
\end{equation}
where
\[
N_k:=\partial_k^*\partial_k,
\qquad
\overline N_k:=\partial_k\partial_k^*=I-N_k.
\]

The lifts
\(\mathsf L_{\downarrow}\) and \(\mathsf L_{\uparrow}\)
map each Hamming layer to the adjacent lower and upper layers, whereas
\(\mathsf L^-\) and \(\mathsf L^+\) resolve, at the current position,
the coin directions leading to the lower and upper layers.

For \(h\in V_r\),
\[
\sum_{k=0}^{d-1}N_kh=rh,
\qquad
\sum_{k=0}^{d-1}\overline N_kh=(d-r)h.
\]
Moreover,
\begin{equation}
\|\mathsf L_{\downarrow}h\|^2
=
\|\mathsf L^-h\|^2
=
r\|h\|^2,
\qquad
\|\mathsf L_{\uparrow}h\|^2
=
\|\mathsf L^+h\|^2
=
(d-r)\|h\|^2.
\label{eq:qbn-lift-norms}
\end{equation}

The QBN lifts also satisfy the orthogonality relations
\begin{equation}
\begin{aligned}
\langle \mathsf L^-h,\mathsf L^+q\rangle
&=0,
&
\langle \mathsf L^-h,\mathsf L_{\downarrow}q\rangle
&=0,
\\
\langle \mathsf L^+h,\mathsf L_{\uparrow}q\rangle
&=0,
\end{aligned}
\qquad
h,q\in\mathfrak h_{d-1}.
\label{eq:qbn-lift-orthogonality}
\end{equation}
Indeed, these identities follow from
\[
N_k\overline N_k=0,
\qquad
N_k\partial_k=0,
\qquad
\overline N_k\partial_k^*=0.
\]

It is useful to isolate the conditional shift
\[
\mathsf S
:=
\sum_{k=0}^{d-1}\Xi_k\otimes P_k.
\]
The four QBN lifts are related by
\begin{equation}
\mathsf S\mathsf L^-
=
\mathsf L_{\downarrow},
\qquad
\mathsf S\mathsf L^+
=
\mathsf L_{\uparrow},
\qquad
\mathsf S\mathsf L_{\downarrow}
=
\mathsf L^-,
\qquad
\mathsf S\mathsf L_{\uparrow}
=
\mathsf L^+.
\label{eq:qbn-shift-lifts}
\end{equation}
These identities follow directly from
\(
\Xi_kZ_\sigma=Z_{\sigma\triangle\{k\}}.
\)

Fix
\(
\varepsilon\in\{\pm1\}.
\)
If \(d=2m\), then
\begin{equation}
s_\sigma=\varepsilon,
\qquad
|\sigma|\in\{m-1,m,m+1\},
\label{eq:even-supporting-spin}
\end{equation}
while the spin values outside these three supporting layers are
arbitrary.  If \(d=2m+1\), then
\begin{equation}
s_\sigma=\varepsilon,
\qquad
|\sigma|\in\{m-1,m,m+1,m+2\},
\label{eq:odd-supporting-spin}
\end{equation}
while the spin values outside these four supporting layers are
arbitrary.  These assumptions will be used throughout the remainder of
the paper.

For the common spin value \(\varepsilon\), set
\[
\omega_\varepsilon
:=
e^{-i\varepsilon\pi/d},
\qquad
\Theta_\varepsilon
:=
\omega_\varepsilon^{\mathcal N}.
\]
Thus \(\Theta_\varepsilon\) acts on \(V_r\) as multiplication by
\(\omega_\varepsilon^r\).  Let
\[
\mathsf G
:=
I_{\mathfrak h_{d-1}}\otimes G.
\]
On states supported on layers carrying the common spin value
\(\varepsilon\), the reduced walk is represented by
\begin{equation}
\widetilde W_{\mathbf s}
=
\mathsf S
(\Theta_\varepsilon\otimes I_{\mathcal K})
\mathsf G.
\label{eq:qbn-factorized-walk}
\end{equation}

%%%%%%%%%%%%%%%%%%%%%%%%%%%%%%%%%%%%%%%%%%%%%%%%%%%%%%%%%%%%%%%%%%%
\subsection{Even-dimensional fixed subspaces}
\label{sec:three-layer-fixed-point}
%%%%%%%%%%%%%%%%%%%%%%%%%%%%%%%%%%%%%%%%%%%%%%%%%%%%%%%%%%%%%%%%%%%

Let \(d=2m\) and let
\(
f\in\mathscr H_m^-.
\)
By Theorem~\ref{thm:harmonic-qbn} we have
\(
A^-f=A^+f=0.
\)

Define the QBN three-layer lift
\begin{equation}
\Phi^{(\varepsilon)}
:
\mathscr H_m^-
\longrightarrow
\left(
V_{m-1}\oplus V_m\oplus V_{m+1}
\right)\otimes\mathcal K
%\label{eq:three-layer-lift-map}
\end{equation}
by
\begin{equation}
\Phi^{(\varepsilon)}f
=
\Phi_f^{(\varepsilon)}
:=
\omega_\varepsilon^{m+1}\mathsf L_{\downarrow}f
+
\bigl(
\mathsf L^-
+
\omega_\varepsilon\mathsf L^+
\bigr)f
+
\omega_\varepsilon^m\mathsf L_{\uparrow}f,
\qquad
f\in\mathscr H_m^-.
\label{eq:three-layer-state}
\end{equation}
Here
\(
\mathsf L_{\downarrow}f
\in
V_{m-1}\otimes\mathcal K,\quad
\mathsf L^-f,\,
\mathsf L^+f
\in
V_m\otimes\mathcal K,
\)
and
\(
\mathsf L_{\uparrow}f
\in
V_{m+1}\otimes\mathcal K.
\)
Hence \(\Phi_f^{(\varepsilon)}\) is supported on the three central
Hamming layers.

Since
\[
A^-f=A^+f=0,
\]
the sums of the coin components of
\(\mathsf L_{\downarrow}f\) and
\(\mathsf L_{\uparrow}f\) vanish.  Hence the rank-one part of the
Grover coin vanishes on these components, and
\begin{equation}
\mathsf G\mathsf L_{\downarrow}f
=
-\mathsf L_{\downarrow}f,
\qquad
\mathsf G\mathsf L_{\uparrow}f
=
-\mathsf L_{\uparrow}f.
\label{eq:Grover-adjacent}
\end{equation}

For the central component, since \(f\in V_m\) and \(d=2m\), the sum
of the coin components of
\(
\bigl(
\mathsf L^-
+
\omega_\varepsilon\mathsf L^+
\bigr)f
\)
is
\(
m(1+\omega_\varepsilon)f.
\)
It follows that
\begin{equation}
\mathsf G
\bigl(
\mathsf L^-
+
\omega_\varepsilon\mathsf L^+
\bigr)f
=
\bigl(
\omega_\varepsilon\mathsf L^-
+
\mathsf L^+
\bigr)f.
\label{eq:Gu-middle}
\end{equation}

Define
\begin{equation}
\mathcal H_{\mathrm{har}}^{\mathrm{even}}
:=
\operatorname{Ran}\Phi^{(\varepsilon)}.
\label{eq:even-harmonic-subspace}
\end{equation}

\begin{theorem}
\label{thm:even-harmonic-subspace}
Let \(d=2m\).  The QBN lift
\[
\Phi^{(\varepsilon)}
:
\mathscr H_m^-
\longrightarrow
\mathcal H_{\mathrm{har}}^{\mathrm{even}}
\]
is a linear isomorphism, and
\begin{equation}
\widetilde W_{\mathbf s}
\big|_{\mathcal H_{\mathrm{har}}^{\mathrm{even}}}
=
I_{\mathcal H_{\mathrm{har}}^{\mathrm{even}}}.
\label{eq:even-subspace-identity}
\end{equation}
Moreover,
\begin{equation}
\dim\mathcal H_{\mathrm{har}}^{\mathrm{even}}
=
\dim\mathscr H_m^-
=
C_m.
\label{eq:even-harmonic-subspace-dimension}
\end{equation}
\end{theorem}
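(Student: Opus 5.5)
The plan has three parts: injectivity of $\Phi^{(\varepsilon)}$, a direct check that $\widetilde W_{\mathbf s}$ fixes each $\Phi_f^{(\varepsilon)}$ using the factorization \eqref{eq:qbn-factorized-walk}, and the dimension count.

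\emph{Injectivity.} We look at the $V_m\otimes\mathcal K$ component $(\mathsf L^-+\omega_\varepsilon\mathsf L^+)f$ of $\Phi_f^{(\varepsilon)}$. By the orthogonality relation $\langle\mathsf L^-h,\mathsf L^+q\rangle=0$ in \eqref{eq:qbn-lift-orthogonality} and the norm identities \eqref{eq:qbn-lift-norms} with $r=m$, $d-r=m$, we get
\[
\|(\mathsf L^-+\omega_\varepsilon\mathsf L^+)f\|^2=2m\|f\|^2 .
\]
Hence $\Phi_f^{(\varepsilon)}=0$ forces $f=0$. Since $\mathcal H_{\mathrm{har}}^{\mathrm{even}}$ is defined as the range, $\Phi^{(\varepsilon)}$ is a linear isomorphism onto it. Combining this with Corollary~\ref{thm:dim-harmonic-qbn} (here $q_-=q_+=m$) gives \eqref{eq:even-harmonic-subspace-dimension}.

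\emph{Fixed-point identity.} All three layers $m-1,m,m+1$ carry spin value $\varepsilon$ by \eqref{eq:even-supporting-spin}, so we may use $\widetilde W_{\mathbf s}=\mathsf S(\Theta_\varepsilon\otimes I)\mathsf G$. Below, $\Theta_\varepsilon$ stands for $\Theta_\varepsilon\otimes I$. We apply the three factors term by term.
\begin{itemize}
\item[(a)] Lower layer: by \eqref{eq:Grover-adjacent}, $\mathsf G$ sends $\omega_\varepsilon^{m+1}\mathsf L_\downarrow f$ to $-\omega_\varepsilon^{m+1}\mathsf L_\downarrow f$. Then $\Theta_\varepsilon$ multiplies by $\omega_\varepsilon^{m-1}$, and $\mathsf S$ turns $\mathsf L_\downarrow$ into $\mathsf L^-$ by \eqref{eq:qbn-shift-lifts}. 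The result is $-\omega_\varepsilon^{2m}\mathsf L^-f$. Since $\omega_\varepsilon^{2m}=e^{-i\varepsilon\pi}=-1$, this equals $\mathsf L^-f$.
\item[(b)] Upper layer: $\omega_\varepsilon^m\mathsf L_\uparrow f$ goes to $-\omega_\varepsilon^m\mathsf L_\uparrow f$, then to $-\omega_\varepsilon^{2m+1}\mathsf L_\uparrow f=\omega_\varepsilon\mathsf L_\uparrow f$, and $\mathsf S$ gives $\omega_\varepsilon\mathsf L^+f$.
\item[(c)] Middle layer: by \eqref{eq:Gu-middle}, $\mathsf G$ gives $(\omega_\varepsilon\mathsf L^-+\mathsf L^+)f$. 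Then $\Theta_\varepsilon$ multiplies by $\omega_\varepsilon^m$, and $\mathsf S$ converts $\mathsf L^\mp$ into $\mathsf L_{\downarrow}$ and $\mathsf L_\uparrow$. The result is $\omega_\varepsilon^{m+1}\mathsf L_\downarrow f+\omega_\varepsilon^m\mathsf L_\uparrow f$.
\end{itemize}
Summing (a), (b) and (c) reproduces exactly $\Phi_f^{(\varepsilon)}$, which proves \eqref{eq:even-subspace-identity}.

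\emph{Main obstacle.} The delicate point is justifying \eqref{eq:qbn-factorized-walk} and \eqref{eq:Gu-middle} on exactly these states. For the factorization we need the phase attached to $\sigma\triangle\{k\}$ in \eqref{eq:eig-general}, which is applied at the source vertex before the shift. We must confirm that every source vertex in the support lies in layers $m-1,m,m+1$, so that the phase is $\omega_\varepsilon^{|\cdot|}$. The possibly arbitrary spins outside these layers then never enter. For \eqref{eq:Gu-middle}, a short check suffices. Write $u=(\mathsf L^-+\omega_\varepsilon\mathsf L^+)f$, whose coin-component sum is $m(1+\omega_\varepsilon)f$. Then
\[
\mathsf G u=\frac{2}{d}\,m(1+\omega_\varepsilon)\sum_k f\otimes e_k-u,
\qquad
\sum_k f\otimes e_k=(\mathsf L^-+\mathsf L^+)f ,
\]
and since $2m/d=1$ this yields $(\omega_\varepsilon\mathsf L^-+\mathsf L^+)f$, as claimed.
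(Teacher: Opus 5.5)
Your proposal is correct and follows the paper's proof essentially step for step. Both arguments apply $\mathsf S(\Theta_\varepsilon\otimes I_{\mathcal K})\mathsf G$ term by term and use $\omega_\varepsilon^{2m}=-1$; both get injectivity from the middle-layer component, via the orthogonality of $\mathsf L^-f$ and $\mathsf L^+f$ together with the norm identities; and both take the dimension from Corollary~\ref{thm:dim-harmonic-qbn}. Your additional checks of the factorization on the three-layer support and of \eqref{eq:Gu-middle} are sound and simply make explicit what the paper states just before the theorem.
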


\begin{proof}
Apply the factorization \eqref{eq:qbn-factorized-walk} to
\eqref{eq:three-layer-state}.  Using
\eqref{eq:Grover-adjacent} and \eqref{eq:Gu-middle}, the phase
operator produces
\(
-\omega_\varepsilon^{2m}\mathsf L_{\downarrow}f,\quad
\omega_\varepsilon^m
\bigl(
\omega_\varepsilon\mathsf L^-
+
\mathsf L^+
\bigr)f,
\quad
-\omega_\varepsilon^{2m+1}\mathsf L_{\uparrow}f.
\)

Since
\(
\omega_\varepsilon^{2m}=-1,
\)
applying the conditional shift and using
\eqref{eq:qbn-shift-lifts} gives
\[
\widetilde W_{\mathbf s}\Phi_f^{(\varepsilon)}
=
\Phi_f^{(\varepsilon)}.
\]
Hence every vector in
\(\mathcal H_{\mathrm{har}}^{\mathrm{even}}\) is fixed.

If
\(
\Phi_f^{(\varepsilon)}=0,
\)
then the two central components
\(\mathsf L^-f\) and \(\mathsf L^+f\), which have disjoint coin
supports, both vanish. By \eqref{eq:qbn-lift-norms} we have \(f=0\).  Thus \(\Phi^{(\varepsilon)}\) is injective and hence an
isomorphism onto its range.  The dimension formula follows from
Corollary~\ref{thm:dim-harmonic-qbn}.
\end{proof}

Equivalently, the QBN lift intertwines the identity dynamics on the
central harmonic subspace with the walk:
\begin{equation}
\widetilde W_{\mathbf s}\Phi^{(\varepsilon)}
=
\Phi^{(\varepsilon)}
I_{\mathscr H_m^-}.
\label{eq:even-intertwining}
\end{equation}

The next result shows that this QBN lift exhausts all fixed points with
the same three-layer support.

\begin{theorem}
\label{thm:even-harmonic-completeness}
Let \(d=2m\) with \(m\ge2\).  Set
\[
\mathscr L_m^{\mathrm{even}}
:=
\left(
V_{m-1}\oplus V_m\oplus V_{m+1}
\right)\otimes\mathcal K.
\]
Then
\begin{equation}
\left\{
\Phi\in\mathscr L_m^{\mathrm{even}}:
\widetilde W_{\mathbf s}\Phi=\Phi
\right\}
=
\mathcal H_{\mathrm{har}}^{\mathrm{even}}.
\label{eq:even-harmonic-completeness}
\end{equation}
\end{theorem}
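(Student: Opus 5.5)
We prove the two inclusions separately. The inclusion \(\supseteq\) is Theorem~\ref{thm:even-harmonic-subspace}, so the work lies in showing that every fixed point \(\Phi=\sum_\sigma Z_\sigma\otimes u_\sigma\in\mathscr L_m^{\mathrm{even}}\) equals \(\Phi_f^{(\varepsilon)}\) for some \(f\in\mathscr H_m^-\). We work with the factorization \eqref{eq:qbn-factorized-walk}. Since \(\mathsf S\) is an involution, \(\widetilde W_{\mathbf s}\Phi=\Phi\) is equivalent to \(\mathsf S\Phi=(\Theta_\varepsilon\otimes I)\mathsf G\Phi\). Write \(v_\sigma:=\omega_\varepsilon^{|\sigma|}Gu_\sigma\). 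The fixed-point equation \eqref{eq:eig-component-general} with \(\lambda=1\) then reads componentwise
\[
u_\sigma^{(k)}=v_{\sigma\triangle\{k\}}^{(k)},\qquad \sigma\in\Gamma_{d-1},\ 0\le k\le d-1,
\]
and, because \(\Phi\) lives on three layers, \(u_\sigma=v_\sigma=0\) for \(|\sigma|\notin\{m-1,m,m+1\}\).

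\textbf{Step 1: outer layers.} Taking \(|\sigma|=m-2\) gives \(v_\tau^{(k)}=0\) for \(|\tau|=m-1\) and \(k\in\tau\). Taking \(|\sigma|=m\pm1\) as well, we see that on layer \(m-1\) both \(u_\tau\) and \(v_\tau\) are supported on the \(m+1\) upward directions \(k\notin\tau\). Symmetrically, on layer \(m+1\) they are supported on the downward directions \(k\in\tau\).

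Now \(Gu=\tfrac2d\bigl(\textstyle\sum_j u^{(j)}\bigr)\mathbf 1-u\). For \(|\tau|=m-1\), the vector \(Gu_\tau\) must vanish in the directions \(k\in\tau\). This set is nonempty precisely because \(m\ge2\), so the coin sum of \(u_\tau\) is zero and \(Gu_\tau=-u_\tau\). The same argument applies on layer \(m+1\), where the \(m+1\ge1\) upward directions are nonempty. This is where the hypothesis \(m\ge2\) enters: for \(m=1\) layer \(0\) has no downward directions and the argument breaks down.

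\textbf{Step 2: the middle layer, which is the main step.} Let \(x_\sigma:=u_\sigma\) for \(|\sigma|=m\). Compose the two relations: \(u_\sigma^{(k)}=v_{\sigma\setminus\{k\}}^{(k)}\), then \(v_\tau=-\omega_\varepsilon^{m-1}u_\tau\) on the outer layer, then \(u_\tau^{(k)}=v_{\tau\cup\{k\}}^{(k)}\). This expresses \(x_\sigma^{(k)}\) through \((Gx_\sigma)^{(k)}\), with a factor depending only on whether \(k\in\sigma\) or \(k\notin\sigma\); the analogous chain through layer \(m+1\) handles \(k\notin\sigma\). Using \(\omega_\varepsilon^{2m}=-1\), one obtains a relation
\[
Gx_\sigma=D_\sigma x_\sigma,
\]
where \(D_\sigma\) is diagonal, takes one unimodular constant on the directions in \(\sigma\), and takes another on the directions outside \(\sigma\). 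Since \(Gx_\sigma+x_\sigma\) is a multiple of \(\mathbf 1\), it follows that \(x_\sigma^{(k)}\) is constant on \(k\in\sigma\) (value \(a_\sigma\)) and constant on \(k\notin\sigma\) (value \(b_\sigma\)), with a fixed ratio \(b_\sigma/a_\sigma\).

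I expect the bookkeeping of phases here to be the delicate point. The comparison with \eqref{eq:Gu-middle} should confirm that \(b_\sigma=\omega_\varepsilon a_\sigma\), and the consistency condition involving the coin sum \(m(a_\sigma+b_\sigma)\) must also be checked. Setting \(f:=\sum_{|\sigma|=m}a_\sigma Z_\sigma\in V_m\), the middle component then equals \((\mathsf L^-+\omega_\varepsilon\mathsf L^+)f\).

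\textbf{Step 3: harmonicity and reconstruction.} For \(|\tau|=m-1\) and \(k\notin\tau\), the outer components are
\[
u_\tau^{(k)}=v_\tau^{(k)}\cdot(\text{phase})=\text{(fixed phase)}\cdot x_{\tau\cup\{k\}}^{(k)}=\text{(fixed phase)}\cdot a_{\tau\cup\{k\}}.
\]
Their coin sum is therefore a multiple of \((A^-f)_\tau\). Step 1 shows this coin sum vanishes, so \(A^-f=0\), that is, \(f\in\mathscr H_m^-\). The same computation identifies the layer-\((m-1)\) component with \(\omega_\varepsilon^{m+1}\mathsf L_{\downarrow}f\) and the layer-\((m+1)\) component with \(\omega_\varepsilon^m\mathsf L_{\uparrow}f\). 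Hence \(\Phi=\Phi_f^{(\varepsilon)}\in\mathcal H_{\mathrm{har}}^{\mathrm{even}}\), which proves \(\subseteq\).
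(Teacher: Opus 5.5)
Your proposal is correct and follows the same route as the paper's proof: vanishing coin sums on the two outer layers, the forced form $(\mathsf L^-+\omega_\varepsilon\mathsf L^+)f$ in the middle layer, reconstruction of the outer layers, and harmonicity of $f$. Where the paper only asserts that the equations can be ``solved successively'', you make the steps explicit; the phase bookkeeping you left open does close, since $D_\sigma$ equals $\omega_\varepsilon$ on $\sigma$ and $\omega_\varepsilon^{-1}$ off $\sigma$ (both $\neq-1$), giving $b_\sigma=\omega_\varepsilon a_\sigma$, with the consistency check automatic because $2m/d=1$. One harmless slip: on layer $m+1$ there are $m-1$ upward directions, not $m+1$, and this set is nonempty exactly because $m\ge2$.
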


\begin{proof}
The inclusion
\[
\mathcal H_{\mathrm{har}}^{\mathrm{even}}
\subseteq
\left\{
\Phi\in\mathscr L_m^{\mathrm{even}}:
\widetilde W_{\mathbf s}\Phi=\Phi
\right\}
\]
follows from Theorem~\ref{thm:even-harmonic-subspace}.  We prove the
reverse inclusion.  By complex conjugation, it is enough to consider
\(\varepsilon=1\).

Let
\[
\Phi
=
\Phi_{m-1}+\Phi_m+\Phi_{m+1}
\in
\mathscr L_m^{\mathrm{even}}
\]
satisfy
\[
\widetilde W_{\mathbf s}\Phi=\Phi,
\]
where
\(
\Phi_r\in V_r\otimes\mathcal K.
\)
Since \(\Phi\) has no components in
\(V_{m-2}\otimes\mathcal K\) or
\(V_{m+2}\otimes\mathcal K\), the fixed-point equation implies that
the outward components created from the two outer layers vanish.
Equivalently, the sums of the coin amplitudes on
\(\Phi_{m-1}\) and \(\Phi_{m+1}\) vanish.  Hence the Grover coin acts
as \(-I\) on these two components.

The fixed-point equation between the three remaining layers can
therefore be solved successively from the middle layer.  There is a
unique \(f\in V_m\) such that
\begin{equation}
\Phi_m
=
\bigl(
\mathsf L^-
+
\omega\mathsf L^+
\bigr)f.
\label{eq:even-completeness-middle-form}
\end{equation}
The equations on the adjacent layers then give
\begin{equation}
\Phi_{m-1}
=
\omega^{m+1}\mathsf L_{\downarrow}f,
\qquad
\Phi_{m+1}
=
\omega^m\mathsf L_{\uparrow}f.
\label{eq:even-completeness-outer-form}
\end{equation}

Because the sums of the coin components of
\(\Phi_{m-1}\) and \(\Phi_{m+1}\) vanish, 
\eqref{eq:even-completeness-outer-form} yields
\[
A^-f=0,
\qquad
A^+f=0.
\]
Thus
\[
f\in
\mathscr H_m^-\cap\mathscr H_m^+
=
\mathscr H_m^-,
\]
where the last equality follows from
Theorem~\ref{thm:harmonic-qbn}.  Combining
\eqref{eq:even-completeness-middle-form} and
\eqref{eq:even-completeness-outer-form}, we obtain
\[
\Phi
=
\omega^{m+1}\mathsf L_{\downarrow}f
+
\bigl(
\mathsf L^-+\omega\mathsf L^+
\bigr)f
+
\omega^m\mathsf L_{\uparrow}f
=
\Phi_f^{(1)}.
\]
Hence
\(
\Phi\in
\mathcal H_{\mathrm{har}}^{\mathrm{even}}.
\)
The case \(\varepsilon=-1\) follows by complex conjugation.
\end{proof}

\begin{remark}
The hypothesis \(m\ge2\) in
Theorem~\ref{thm:even-harmonic-completeness} is necessary.  When
\(d=2\), the three supporting layers comprise the whole hypercube.
A direct calculation for \(\varepsilon=1\) gives a two-dimensional
fixed-point space, whereas
\[
\dim\mathscr H_1^-=C_1=1.
\]
The case \(\varepsilon=-1\) follows by complex conjugation.  Thus the
QBN lift remains a fixed subspace but is not complete in this
exceptional case.
\end{remark}

%%%%%%%%%%%%%%%%%%%%%%%%%%%%%%%%%%%%%%%%%%%%%%%%%%%%%%%%%%%%%%%%%%%
\subsection{Odd-dimensional harmonic reduction}
\label{sec:odd-invariant-subspace}
%%%%%%%%%%%%%%%%%%%%%%%%%%%%%%%%%%%%%%%%%%%%%%%%%%%%%%%%%%%%%%%%%%%

Let \(d=2m+1\), \(m\ge1\).  For
\(
f\in\mathscr H_m^-,
\)
set
\(
g:=A^+f.
\)
By Theorem~\ref{thm:harmonic-qbn},
\[
A^-f=0,
\qquad
A^+f=g,
\qquad
A^-g=f,
\qquad
A^+g=0,
\]
with
\[
f\in V_m,
\qquad
g\in\mathscr H_{m+1}^+\subset V_{m+1}.
\]
These two vectors give rise to eight canonical QBN
position--coin components, distributed over the four central Hamming
layers as
\[
\mathsf L_{\downarrow}f
\in
V_{m-1}\otimes\mathcal K,\qquad
\mathsf L^-f,\quad
\mathsf L^+f,\quad
\mathsf L_{\downarrow}g
\in
V_m\otimes\mathcal K,
\]
\[
\mathsf L_{\uparrow}f,\quad
\mathsf L^-g,\quad
\mathsf L^+g
\in
V_{m+1}\otimes\mathcal K,\qquad
\mathsf L_{\uparrow}g
\in
V_{m+2}\otimes\mathcal K.
\]
Accordingly, define the four-layer position--coin space
\[
\mathscr L_m^{\mathrm{odd}}
:=
\left(
V_{m-1}\oplus V_m\oplus V_{m+1}\oplus V_{m+2}
\right)\otimes\mathcal K.
\]
For
\(
x=(x_0,\ldots,x_7)^T\in\mathbb C^8,
\)
set
\[
\begin{aligned}
\Psi_f(x)
:={}&
x_0\mathsf L_{\downarrow}f
+x_1\mathsf L^-f
+x_2\mathsf L^+f
+x_3\mathsf L_{\downarrow}g
\\
&+
x_4\mathsf L_{\uparrow}f
+x_5\mathsf L^-g
+x_6\mathsf L^+g
+x_7\mathsf L_{\uparrow}g.
\end{aligned}
\]
Since \(\Psi_f(x)\) is bilinear in \(f\) and \(x\), it induces the
linear QBN four-layer lift
\[
\mathfrak T_m:
\mathscr H_m^-\otimes\mathbb C^8
\longrightarrow
\mathscr L_m^{\mathrm{odd}},
\qquad
\mathfrak T_m(f\otimes x)
=
\Psi_f(x).
\]

We next determine the action of the Grover coin on these components.
Recall that
\[
\mathsf G
=
I_{\mathfrak h_{d-1}}\otimes G,
\qquad
G
=
2|u_d\rangle\langle u_d|-I_{\mathcal K},
\]
where
\(
u_d
=
\frac1{\sqrt d}
\sum_{k=0}^{d-1}e_k.
\)
For
\[
v
=
\sum_{k=0}^{d-1}h_k\otimes e_k
\in
\mathfrak h_{d-1}\otimes\mathcal K,
\]
one has
\[
\mathsf Gv
=
\frac{2}{d}
\sum_{j=0}^{d-1}
\left(
\sum_{k=0}^{d-1}h_k
\right)\otimes e_j
-v.
\]

Since \(A^-f=0\) and \(A^+g=0\),
\[
\mathsf G\mathsf L_{\downarrow}f
=
-\mathsf L_{\downarrow}f,
\qquad
\mathsf G\mathsf L_{\uparrow}g
=
-\mathsf L_{\uparrow}g.
\]
For the two central components,
\[
\sum_{k=0}^{d-1}
\left(
x_1N_kf
+x_2\overline N_kf
+x_3\partial_kg
\right)
=
\bigl[
mx_1+(m+1)x_2+x_3
\bigr]f,
\]
and
\[
\sum_{k=0}^{d-1}
\left(
x_4\partial_k^*f
+x_5N_kg
+x_6\overline N_kg
\right)
=
\bigl[
x_4+(m+1)x_5+mx_6
\bigr]g.
\]
Combining these identities with
\eqref{eq:qbn-shift-lifts} and
\eqref{eq:qbn-factorized-walk}, one obtains
\begin{equation}
\widetilde W_{\mathbf s}\Psi_f(x)
=
\Psi_f\bigl(\mathcal M_m^{(\varepsilon)}x\bigr).
\label{eq:odd-reduced-intertwining-vector}
\end{equation}
Writing
\[
\mathcal M_m^{(\varepsilon)}x
=
(y_0,\ldots,y_7)^T,
\]
we have
\begin{equation}
\begin{aligned}
y_0
&=
\frac{\omega_\varepsilon^m}{d}
\bigl[
-x_1+2(m+1)x_2+2x_3
\bigr],
&
y_1
&=
-\omega_\varepsilon^{m-1}x_0,
\\
y_2
&=
-\omega_\varepsilon^{m+1}x_4,
&
y_3
&=
\frac{\omega_\varepsilon^{m+1}}{d}
\bigl[
2x_4+x_5+2mx_6
\bigr],
\\
y_4
&=
\frac{\omega_\varepsilon^m}{d}
\bigl[
2mx_1+x_2+2x_3
\bigr],
&
y_5
&=
-\omega_\varepsilon^m x_3,
\\
y_6
&=
-\omega_\varepsilon^{m+2}x_7,
&
y_7
&=
\frac{\omega_\varepsilon^{m+1}}{d}
\bigl[
2x_4+2(m+1)x_5-x_6
\bigr].
\end{aligned}
\label{eq:odd-reduced-map}
\end{equation}

Since
\(
\omega_{-1}
=
\overline{\omega_{+1}},
\)
we also have
\(
\mathcal M_m^{(-1)}
=
\overline{\mathcal M_m^{(+1)}}.
\)
Define the odd-dimensional QBN harmonic subspace by
\begin{equation*}
\mathcal H_{\mathrm{har}}^{\mathrm{odd}}
:=
\operatorname{Ran}\mathfrak T_m
\subset
\mathscr L_m^{\mathrm{odd}}.
\end{equation*}

\begin{theorem}
\label{thm:odd-harmonic-subspace}
Let \(d=2m+1\), \(m\ge1\).  Then
\[
\mathfrak T_m:
\mathscr H_m^-\otimes\mathbb C^8
\longrightarrow
\mathcal H_{\mathrm{har}}^{\mathrm{odd}}
\]
is a linear isomorphism.  Moreover,
\begin{equation}
\widetilde W_{\mathbf s}\mathfrak T_m
=
\mathfrak T_m
\left(
I_{\mathscr H_m^-}
\otimes
\mathcal M_m^{(\varepsilon)}
\right).
\label{eq:odd-global-intertwining}
\end{equation}
Consequently,
\begin{equation}
\widetilde W_{\mathbf s}
\big|_{\mathcal H_{\mathrm{har}}^{\mathrm{odd}}}
\sim
I_{\mathscr H_m^-}
\otimes
\mathcal M_m^{(\varepsilon)},
\label{eq:odd-subspace-similarity}
\end{equation}
and
\begin{equation}
\dim\mathcal H_{\mathrm{har}}^{\mathrm{odd}}
=
8\dim\mathscr H_m^-
=
8C_{m+1}.
\label{eq:odd-harmonic-subspace-dimension}
\end{equation}
\end{theorem}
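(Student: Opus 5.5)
The proof has three parts. First, the intertwining relation \eqref{eq:odd-global-intertwining} follows from \eqref{eq:odd-reduced-intertwining-vector} by linearity. Both sides of \eqref{eq:odd-global-intertwining} are linear maps on \(\mathscr H_m^-\otimes\mathbb C^8\). On elementary tensors \(f\otimes x\) the left side is \(\widetilde W_{\mathbf s}\Psi_f(x)\) and the right side is \(\Psi_f(\mathcal M_m^{(\varepsilon)}x)\). To make this rigorous, I will recheck the derivation of \eqref{eq:odd-reduced-map} in three steps.

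\begin{enumerate}[(a)]
\item Apply \(\mathsf G\) to each of the eight components, using the coin-sum computations displayed before the theorem. In these computations \(A^-f=0\), \(A^-g=f\), \(A^+f=g\) and \(A^+g=0\) turn every coin sum into a multiple of \(f\) or of \(g\).
\item Re-express the results in terms of the lifts. The constant coin vector \(\frac2d\sum_j f\otimes e_j\) on \(V_m\) equals \(\frac2d(\mathsf L^-+\mathsf L^+)f\), since \(N_k+\overline N_k=I\); likewise for \(g\) on \(V_{m+1}\).
\item Multiply by the layer phase \(\omega_\varepsilon^r\) and apply \(\mathsf S\) via \eqref{eq:qbn-shift-lifts}. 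This moves components between layers: \(\mathsf L^\pm\) goes to \(\mathsf L_{\downarrow/\uparrow}\) and back.
\end{enumerate}

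This only involves layers carrying spin \(\varepsilon\), so \eqref{eq:qbn-factorized-walk} applies. The images stay in \(V_{m-1},\dots,V_{m+2}\), because the outward components \(\mathsf L_\downarrow f\) and \(\mathsf L_\uparrow g\) are shifted inward.

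Second, and this is the main point, I will prove that \(\mathfrak T_m\) is injective; isomorphism onto its range then follows. I plan to use the orthogonal decomposition into layers and, within each layer, into coin-support types. Write a general element as \(\sum_{i}\mathfrak T_m(f_i\otimes e_i)\) with \(f_i\in\mathscr H_m^-\), so that the \(x_i\)-slot carries its own harmonic vector, and let \(g_i=A^+f_i\).

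The outer layers are immediate. On \(V_{m-1}\otimes\mathcal K\) the only term is \(\mathsf L_\downarrow f_0\), and on \(V_{m+2}\otimes\mathcal K\) the only term is \(\mathsf L_\uparrow g_7\). By \eqref{eq:qbn-lift-norms} these have norms \(\sqrt m\|f_0\|\) and \(\sqrt{m}\|g_7\|\), with \(\|g_7\|=\|f_7\|\). Hence \(f_0=f_7=0\).

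The central layer \(V_m\) requires more care, and I expect it to be the main obstacle. There \(\mathsf L^-f_1+\mathsf L^+f_2+\mathsf L_\downarrow g_3=0\) must hold, but \(\mathsf L^+f_2\) and \(\mathsf L_\downarrow g_3\) are not orthogonal in general. By \eqref{eq:qbn-lift-orthogonality}, \(\mathsf L^-\) is orthogonal to both, so \(f_1=0\). The remaining relation reads componentwise \(\overline N_kf_2+\partial_kg_3=0\) for all \(k\). Applying \(\partial_k^*\) annihilates the first term, since \(\partial_k^*\overline N_k=0\) holds because \(\partial_k^*\partial_k\partial_k^*=\partial_k^*(I-\partial_k^*\partial_k)=\partial_k^*\). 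This gives \(\partial_k^*\partial_kg_3=N_kg_3=0\) for every \(k\). Summing over \(k\) gives \(\mathcal Ng_3=(m+1)g_3=0\), so \(g_3=0\), hence \(f_3=A^-g_3=0\). Then \(\|\mathsf L^+f_2\|^2=(m+1)\|f_2\|^2=0\) gives \(f_2=0\).

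The layer \(V_{m+1}\) is handled symmetrically, using \(\partial_k\overline N_k\)-type identities, or equivalently \(\partial_k N_k=\partial_k\). The relation \(\mathsf L_\uparrow f_4+\mathsf L^-g_5+\mathsf L^+g_6=0\) first forces \(g_6=0\) by orthogonality. Applying \(\partial_k\) to \(\partial_k^*f_4+N_kg_5=0\) then gives \(\overline N_kf_4=0\) for all \(k\), so \((d-m)f_4=0\). Finally \(g_5=0\) by the norm formula.

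Third, the similarity \eqref{eq:odd-subspace-similarity} follows from the isomorphism and the intertwining relation. The dimension formula \eqref{eq:odd-harmonic-subspace-dimension} follows from injectivity together with Corollary~\ref{thm:dim-harmonic-qbn}, which gives \(\dim\mathscr H_m^-=C_{q_+}=C_{m+1}\).
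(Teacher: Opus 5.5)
Your handling of the intertwining relation, the outer layers ($f_0=f_7=0$), the orthogonality steps ($f_1=0$, $g_6=0$) and the dimension count is fine. The injectivity argument on the two central layers, however, rests on false operator identities.

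By the CAR, $\partial_k^*\overline N_k=\partial_k^*\partial_k\partial_k^*=\partial_k^*$, not $0$. This is exactly what your own displayed computation shows. The identities that do hold are $\partial_k\overline N_k=0$ and $\overline N_k\partial_k^*=0$. Hence applying $\partial_k^*$ to $\overline N_kf_2+\partial_kg_3=0$ yields $\partial_k^*f_2+N_kg_3=0$, not $N_kg_3=0$, and you cannot conclude $g_3=0$.

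The same error occurs on $V_{m+1}$. Since $\partial_kN_k=\partial_k$ (as you yourself note), applying $\partial_k$ to $\partial_k^*f_4+N_kg_5=0$ gives $\overline N_kf_4+\partial_kg_5=0$, not $\overline N_kf_4=0$.

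The failure is structural. Both $\overline N_kf_2$ and $\partial_kg_3$ lie in $\operatorname{Ran}\overline N_k$ (the sector $k\notin\sigma$), on which $\partial_k^*$ is an isometry. So no componentwise manipulation of this kind can separate the two terms. This matches your correct remark that $\mathsf L^+f_2$ and $\mathsf L_\downarrow g_3$ are not orthogonal.

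The missing idea is to sum over $k$ and use the harmonic coupling $A^-g_j=f_j$, which your central-layer argument never invokes. Apply $(\mathsf L^+)^*$ and $\mathsf L_\downarrow^*$ to $\mathsf L^+f_2+\mathsf L_\downarrow g_3=0$, using $(\mathsf L^+)^*\mathsf L_\downarrow=A^-$ and $\mathsf L_\downarrow^*\mathsf L^+=A^+$. This gives $(m+1)f_2+f_3=0$ and $g_2+(m+1)g_3=0$. Applying $A^-$ to the second relation gives $f_2+(m+1)f_3=0$. Since $(m+1)^2-1=m(m+2)\neq0$, you get $f_2=f_3=0$.

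This is the paper's argument. The layer $V_{m+1}$ is handled the same way with $(\mathsf L^-)^*$ and $\mathsf L_\uparrow^*$, giving $(m+1)f_4+f_5=0$ and $f_4+(m+1)f_5=0$.

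Equivalently, a Gram computation does the job:
$\|\mathsf L^+f_2+\mathsf L_\downarrow g_3\|^2=(m+1)(\|f_2\|^2+\|f_3\|^2)+2\operatorname{Re}\langle f_2,f_3\rangle\ge m(\|f_2\|^2+\|f_3\|^2)$.
This uses $\|g_3\|=\|f_3\|$ and $\langle\mathsf L^+f_2,\mathsf L_\downarrow g_3\rangle=\langle f_2,A^-g_3\rangle=\langle f_2,f_3\rangle$.
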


\begin{proof}
It remains to prove that \(\mathfrak T_m\) is injective.  We use the
QBN lift identities
\[
(\mathsf L^+)^*\mathsf L_{\downarrow}=A^-,
\qquad
\mathsf L_{\downarrow}^*\mathsf L^+=A^+,
\]
and
\[
(\mathsf L^-)^*\mathsf L_{\uparrow}=A^+,
\qquad
\mathsf L_{\uparrow}^*\mathsf L^-=A^-.
\]
Identify
\[
\mathscr H_m^-\otimes\mathbb C^8
\cong
(\mathscr H_m^-)^8,
\]
and suppose
\[
\mathfrak T_m(f_0,\ldots,f_7)=0,
\qquad
f_j\in\mathscr H_m^-.
\]
Set
\(
g_j:=A^+f_j.
\)

The components on the two outer layers give
\[
\mathsf L_{\downarrow}f_0=0,
\qquad
\mathsf L_{\uparrow}g_7=0.
\]
By \eqref{eq:qbn-lift-norms} and the isometry
\(
A^+:\mathscr H_m^-\to\mathscr H_{m+1}^+,
\)
we obtain
\[
f_0=f_7=0.
\]

On \(V_m\otimes\mathcal K\), the QBN lift orthogonality
\eqref{eq:qbn-lift-orthogonality} gives
\[
\mathsf L^-f_1=0,
\qquad
\mathsf L^+f_2+\mathsf L_{\downarrow}g_3=0.
\]
By \eqref{eq:qbn-lift-norms}, the first identity gives \(f_1=0\).
Applying \((\mathsf L^+)^*\) and
\(\mathsf L_{\downarrow}^*\) to the second identity gives
\[
(m+1)f_2+f_3=0,
\qquad
g_2+(m+1)g_3=0.
\]
Since
\[
g_j=A^+f_j,
\qquad
A^-g_j=f_j,
\]
the second relation yields
\[
f_2+(m+1)f_3=0.
\]
Since
\[
(m+1)^2-1=m(m+2)\ne0,
\]
we obtain
\[
f_2=f_3=0.
\]

On \(V_{m+1}\otimes\mathcal K\), the same QBN lift orthogonality gives
\[
\mathsf L^+g_6=0,
\qquad
\mathsf L_{\uparrow}f_4+\mathsf L^-g_5=0.
\]
By \eqref{eq:qbn-lift-norms} and the isometry of \(A^+\), the first
identity gives \(f_6=0\).  Applying
\((\mathsf L^-)^*\) and \(\mathsf L_{\uparrow}^*\) to the second
identity gives
\[
g_4+(m+1)g_5=0,
\qquad
(m+1)f_4+f_5=0.
\]
Applying \(A^-\) to the first relation yields
\[
f_4+(m+1)f_5=0.
\]
Hence
\[
f_4=f_5=0.
\]

Thus
\[
f_0=\cdots=f_7=0,
\]
so \(\mathfrak T_m\) is injective.

Equation~\eqref{eq:odd-reduced-intertwining-vector} now gives
\eqref{eq:odd-global-intertwining}.  Since \(\mathfrak T_m\) is an
isomorphism onto its range,
\eqref{eq:odd-subspace-similarity} follows.  Finally,
\[
\dim\mathscr H_m^-=C_{m+1},
\]
and hence
\(
\dim\mathcal H_{\mathrm{har}}^{\mathrm{odd}}
=
8C_{m+1}.
\)
\end{proof}

%%%%%%%%%%%%%%%%%%%%%%%%%%%%%%%%%%%%%%%%%%%%%%%%%%%%%%%%%%%%%%%%%%%
% Isotypic completeness
\subsection{Isotypic completeness}
%%%%%%%%%%%%%%%%%%%%%%%%%%%%%%%%%%%%%%%%%%%%%%%%%%%%%%%%%%%%%%%%%%%
We next show that this eight-dimensional factor is exactly the
multiplicity space of the corresponding symmetric-group representation.

Let \(S_d\) be the symmetric group on \(\mathbb N_{d-1}\).  For
\(\pi\in S_d\) and \(\sigma\in\Gamma_{d-1}\), define
\[
\pi(\sigma)
:=
\{\pi(k):k\in\sigma\}.
\]
The simultaneous action of \(S_d\) on the position--coin space is
defined by
\begin{equation}
\mathcal R_\pi
(Z_\sigma\otimes e_k)
:=
Z_{\pi(\sigma)}\otimes e_{\pi(k)}.
\label{eq:simultaneous-permutation-action}
\end{equation}

Let \(S^{(m+1,m)}\) denote the Specht module associated with the
partition \((m+1,m)\) of \(d=2m+1\).  We write
\[
\left(
\mathscr L_m^{\mathrm{odd}}
\right)_{(m+1,m)}
\]
for the corresponding \(S^{(m+1,m)}\)-isotypic component of
\(\mathscr L_m^{\mathrm{odd}}\).

\begin{theorem}
\label{thm:odd-harmonic-completeness}
Let \(d=2m+1\), \(m\ge1\).  Then
\begin{equation}
\mathscr H_m^-
\cong
S^{(m+1,m)}
\label{eq:harmonic-specht}
\end{equation}
as \(S_d\)-modules, and
\begin{equation}
\mathcal H_{\mathrm{har}}^{\mathrm{odd}}
=
\left(
\mathscr L_m^{\mathrm{odd}}
\right)_{(m+1,m)}.
\label{eq:odd-isotypic-completeness}
\end{equation}
Moreover, \(S^{(m+1,m)}\) occurs with multiplicity \(8\) in
\(\mathscr L_m^{\mathrm{odd}}\). 
\end{theorem}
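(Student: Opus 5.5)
The proof has three steps, done in order: identify the harmonic space as a Specht module, compute the multiplicity of that module in \(\mathscr L_m^{\mathrm{odd}}\), and show that \(\mathcal H_{\mathrm{har}}^{\mathrm{odd}}\) lies in the isotypic component and fills it.

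\emph{Step 1: the harmonic space is \(S^{(m+1,m)}\).}
As an \(S_d\)-module, \(V_r\) is the permutation module \(M^{(d-r,r)}\) on \(r\)-subsets. Young's rule gives, for \(r\le d/2\),
\[
M^{(d-r,r)}\cong\bigoplus_{j=0}^{r}S^{(d-j,j)}.
\]
The operators \(A^\pm\) commute with the permutation action \(Z_\sigma\mapsto Z_{\pi(\sigma)}\), so \(\mathscr H_m^-=\ker(A^-|_{V_m})\) is a submodule. Corollary~\ref{cor:qbn-ladder-estimates} shows that \(A^-:V_m\to V_{m-1}\) is surjective. Hence
\[
\mathscr H_m^-\oplus V_{m-1}\cong V_m
\]
as modules, by complete reducibility in characteristic zero. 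Comparing the two Young's-rule decompositions leaves exactly \(S^{(d-m,m)}=S^{(m+1,m)}\), which proves \eqref{eq:harmonic-specht}. This is consistent with the dimension count \(C_{m+1}\) from Corollary~\ref{thm:dim-harmonic-qbn}.

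\emph{Step 2: the multiplicity is \(8\).}
Each summand \(V_r\otimes\mathcal K\) of \(\mathscr L_m^{\mathrm{odd}}\) carries the diagonal action \(\mathcal R_\pi\), and \(\mathcal K\cong M^{(d-1,1)}\). Compute directly
\[
V_r\otimes\mathcal K\cong\operatorname{Ind}_{S_{d-1}}^{S_d}\operatorname{Res}_{S_{d-1}}V_r
\]
by splitting according to whether \(k\in\sigma\):
\[
V_r\otimes\mathcal K\cong \operatorname{Ind}_{S_{r-1}\times S_1\times S_{d-r}}^{S_d}\mathbf 1\;\oplus\;\operatorname{Ind}_{S_r\times S_1\times S_{d-r-1}}^{S_d}\mathbf 1 .
\]
That is,
\[
V_r\otimes\mathcal K\cong M^{(d-r,r-1,1)}\oplus M^{(d-r-1,r,1)}
\]
up to reordering of parts. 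The multiplicity of \(S^{(m+1,m)}\) in each summand is a Kostka number \(K_{(m+1,m),\mu}\), equal to the number of semistandard tableaux of shape \((m+1,m)\) and content \(\mu\). Alternatively, use Frobenius reciprocity and the branching rule. Restricting \(S^{(m+1,m)}\) to \(S_{d-1}\) gives \(S^{(m,m)}\oplus S^{(m+1,m-1)}\). Pairing each of these with \(\operatorname{Res}V_r\), using Young's rule for \(S_{d-1}\), gives the multiplicity in each layer. Counting over \(r=m-1,m,m+1,m+2\) (using \(V_r\cong V_{d-r}\) for the upper two layers) should give \(1,3,3,1\), total \(8\). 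These numbers match the eight components \(\mathsf L_\downarrow f\mid \mathsf L^\pm f,\mathsf L_\downarrow g\mid \mathsf L_\uparrow f,\mathsf L^\pm g\mid\mathsf L_\uparrow g\). I will check the middle count explicitly via the branching rule: \(\operatorname{Res}V_m\cong M^{(m,m)}\oplus M^{(m+1,m-1)}\) as \(S_{2m}\)-modules, paired against \(S^{(m,m)}\oplus S^{(m+1,m-1)}\), gives \(1+2=3\).

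\emph{Step 3: containment and equality.}
Each lift \(\mathsf L_\downarrow,\mathsf L_\uparrow,\mathsf L^\pm\) is \(S_d\)-equivariant, since
\[
\mathcal R_\pi\Bigl(\sum_k\partial_kh\otimes e_k\Bigr)=\sum_k\partial_{\pi(k)}(\pi h)\otimes e_{\pi(k)},
\]
and similarly for the others. Also \(A^+\) is equivariant. So for each fixed \(x\), the map \(f\mapsto\Psi_f(x)\) is an equivariant map from \(\mathscr H_m^-\cong S^{(m+1,m)}\). Its image therefore lies in the isotypic component, which gives \(\mathcal H_{\mathrm{har}}^{\mathrm{odd}}\subseteq(\mathscr L_m^{\mathrm{odd}})_{(m+1,m)}\). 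By Theorem~\ref{thm:odd-harmonic-subspace}, \(\dim\mathcal H_{\mathrm{har}}^{\mathrm{odd}}=8C_{m+1}=8\dim S^{(m+1,m)}\). This equals the dimension of the isotypic component by Step 2, so the inclusion is an equality, proving \eqref{eq:odd-isotypic-completeness}.

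The main obstacle is the multiplicity count in Step 2, especially the middle layers, where two permutation modules contribute. I will do it via Frobenius reciprocity and Pieri/Young's rule rather than by building explicit intertwiners. The injectivity already proved in Theorem~\ref{thm:odd-harmonic-subspace} supplies the lower bound \(8\), so it would also suffice to establish only the upper bound \(\le 8\) from the Kostka count.
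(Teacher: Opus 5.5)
Your proposal is correct and follows the paper's proof essentially step for step. In both, the kernel of the equivariant surjection $A^-:V_m\to V_{m-1}$ is identified by comparing multiplicity-free Young's-rule decompositions, the multiplicity $1+3+3+1=8$ comes from splitting $V_r\otimes\mathcal K$ by whether $k\in\sigma$ and applying the branching rule with Frobenius reciprocity (your $M^{(d-r,r-1,1)}\oplus M^{(d-r-1,r,1)}$ is exactly the paper's $\mathscr L_r^-\oplus\mathscr L_r^+$), and equality with the isotypic component follows from equivariance of the lifts together with the dimension count $8C_{m+1}$; your explicit middle-layer check and your remark that injectivity of $\mathfrak T_m$ already gives the lower bound $8$ are both valid.
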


\begin{proof}
Let
\[
\rho_\pi Z_\sigma
:=
Z_{\pi(\sigma)},
\qquad
\pi\in S_d,
\]
denote the induced representation of \(S_d\) on
\(\mathfrak h_{d-1}\).  The QBN operators satisfy
\[
\rho_\pi\partial_k\rho_\pi^{-1}
=
\partial_{\pi(k)},
\qquad
\rho_\pi\partial_k^*\rho_\pi^{-1}
=
\partial_{\pi(k)}^*.
\]
Hence
\[
\rho_\pi A^-
=
A^-\rho_\pi,
\qquad
\rho_\pi A^+
=
A^+\rho_\pi,
\]
so the QBN harmonic subspaces are \(S_d\)-invariant.

The permutation module \(V_r\) has the multiplicity-free decomposition
\begin{equation}
V_r
\cong
\bigoplus_{j=0}^{\min(r,d-r)}
S^{(d-j,j)};
\label{eq:slice-specht-decomposition}
\end{equation}
see \cite{Sagan2001}.  For \(d=2m+1\),
\[
V_m
\cong
\bigoplus_{j=0}^{m}
S^{(2m+1-j,j)},
\]
whereas
\[
V_{m-1}
\cong
\bigoplus_{j=0}^{m-1}
S^{(2m+1-j,j)}.
\]
Thus the only irreducible summand of \(V_m\) that does not occur in
\(V_{m-1}\) is
\(
S^{(m+1,m)}.
\) Since
\(
A^-
\)
is \(S_d\)-equivariant and surjective, its kernel is precisely this
remaining irreducible summand.  Therefore
\begin{equation}
\mathscr H_m^-
=
\ker\left(A^-\big|_{V_m}\right)
\cong
S^{(m+1,m)}.
\label{eq:harmonic-specht-module}
\end{equation}

We next determine the multiplicity of \(S^{(m+1,m)}\) in
\(\mathscr L_m^{\mathrm{odd}}\).  For each \(r\), decompose
\begin{equation}
V_r\otimes\mathcal K
=
\mathscr L_r^-
\oplus
\mathscr L_r^+,
\label{eq:direction-space-decomposition}
\end{equation}
where
\[
\mathscr L_r^-
:=
\operatorname{span}
\left\{
Z_\sigma\otimes e_k:
|\sigma|=r,\ k\in\sigma
\right\},
\quad
\mathscr L_r^+
:=
\operatorname{span}
\left\{
Z_\sigma\otimes e_k:
|\sigma|=r,\ k\notin\sigma
\right\}.
\]
Thus \(\mathscr L_r^-\) consists of the coin directions leading from
the \(r\)-th Hamming layer to the \((r-1)\)-th layer, whereas
\(\mathscr L_r^+\) consists of those leading to the \((r+1)\)-th
layer.

Let \(M_s^{(2m)}\) denote the permutation module of \(S_{2m}\) on the
\(s\)-subsets of a \(2m\)-element set.  Fixing one coin coordinate gives
\begin{equation}
\mathscr L_r^-
\cong
\operatorname{Ind}_{S_{2m}}^{S_{2m+1}}
M_{r-1}^{(2m)},
\qquad
\mathscr L_r^+
\cong
\operatorname{Ind}_{S_{2m}}^{S_{2m+1}}
M_r^{(2m)}.
\label{eq:direction-induced-modules}
\end{equation}
By the branching rule,
\begin{equation}
\operatorname{Res}_{S_{2m}}^{S_{2m+1}}
S^{(m+1,m)}
\cong
S^{(m,m)}
\oplus
S^{(m+1,m-1)}.
\label{eq:specht-branching}
\end{equation}
Using Frobenius reciprocity together with the decomposition of the
permutation modules \(M_s^{(2m)}\), the multiplicities of
\(S^{(m+1,m)}\) are
\[
\begin{array}{c|cc|c}
r
&
\mathscr L_r^-
&
\mathscr L_r^+
&
V_r\otimes\mathcal K
\\ \hline
m-1&0&1&1\\
m&1&2&3\\
m+1&2&1&3\\
m+2&1&0&1.
\end{array}
\]
Therefore \(S^{(m+1,m)}\) occurs in
\(\mathscr L_m^{\mathrm{odd}}\) with multiplicity
\(
1+3+3+1=8.
\)
Consequently,
\begin{equation}
\dim
\left(
\mathscr L_m^{\mathrm{odd}}
\right)_{(m+1,m)}
=
8\dim S^{(m+1,m)}
\label{eq:odd-isotypic-dimension-proof}
\end{equation}

Moreover, the QBN lifts satisfy
\[
\mathcal R_\pi\mathsf L_{\downarrow}
=
\mathsf L_{\downarrow}\rho_\pi,
\qquad
\mathcal R_\pi\mathsf L_{\uparrow}
=
\mathsf L_{\uparrow}\rho_\pi,\qquad
\mathcal R_\pi\mathsf L^-
=
\mathsf L^-\rho_\pi,
\qquad
\mathcal R_\pi\mathsf L^+
=
\mathsf L^+\rho_\pi.
\]
Since \(A^+\) is \(S_d\)-equivariant, the four-layer lift satisfies
\begin{equation}
\mathcal R_\pi\mathfrak T_m
=
\mathfrak T_m
\left(
\rho_\pi\big|_{\mathscr H_m^-}
\otimes
I_{\mathbb C^8}
\right).
\label{eq:Tm-equivariance}
\end{equation}
Therefore
\(
\mathcal H_{\mathrm{har}}^{\mathrm{odd}}
=
\operatorname{Ran}\mathfrak T_m
\)
is an \(S_d\)-submodule.  Since
\(
\mathfrak T_m
\)
is an \(S_d\)-equivariant isomorphism and \(S_d\) acts trivially on
\(\mathbb C^8\),
\[
\mathcal H_{\mathrm{har}}^{\mathrm{odd}}
\cong
\left(
S^{(m+1,m)}
\right)^{\oplus8}.
\]
Hence
\[
\mathcal H_{\mathrm{har}}^{\mathrm{odd}}
\subseteq
\left(
\mathscr L_m^{\mathrm{odd}}
\right)_{(m+1,m)}.
\]
By Theorem~\ref{thm:odd-harmonic-subspace}
and \eqref{eq:odd-isotypic-dimension-proof} we have
\[
\mathcal H_{\mathrm{har}}^{\mathrm{odd}}
=
\left(
\mathscr L_m^{\mathrm{odd}}
\right)_{(m+1,m)}.
\]
\end{proof}

\subsection{Spectrum of the harmonic dynamics}
\label{sec:harmonic-spectrum}

We now determine
the spectrum of the nontrivial odd-dimensional dynamics.

Let
\[
\chi_m(\lambda)
:=
\det\!\left(
\lambda I-\mathcal M_m^{(\varepsilon)}
\right).
\]
A direct calculation from \eqref{eq:odd-reduced-map} gives
\begin{equation*}
%\label{eq:odd-characteristic-polynomial}
\chi_m(\lambda)
=
\lambda^8+A_d\lambda^6+B_d\lambda^4+A_d\lambda^2+1,
\end{equation*}
where
\begin{equation*}
%\label{eq:odd-spectrum-coefficients}
A_d
=
\frac4{d^2}
\left(
1-d\sin^2\frac{\pi}{d}
\right),
\qquad
B_d
=
\frac{2+2(3-d^2)\cos(2\pi/d)}{d^2}.
\end{equation*}
The same polynomial applies for both \(\varepsilon=\pm1\), since
\(
\mathcal M_m^{(-1)}
=
\overline{\mathcal M^{(+1)}_m}
\)
and the coefficients of \(\chi_m\) are real.

Write
\[
\mathbb T:=\{z\in\mathbb C:|z|=1\}.
\]
By Theorem~\ref{thm:odd-harmonic-subspace},
\[
\widetilde W_{\mathbf s}
\big|_{\mathcal H_{\mathrm{har}}^{\mathrm{odd}}}
\sim
I_{\mathscr H_m^-}\otimes
\mathcal M_m^{(\varepsilon)}.
\]
Since the restriction on the left-hand side is unitary,
\begin{equation}
\label{eq:odd-reduced-spectrum-unit-circle}
\operatorname{Spec}
\bigl(
\mathcal M_m^{(\varepsilon)}
\bigr)
\subseteq
\mathbb T.
\end{equation}

Let \(\lambda\) be a root of \(\chi_m\). Since
\(\chi_m(0)=1\), we have \(\lambda\neq0\). Dividing
\(\chi_m(\lambda)=0\) by \(\lambda^4\) and setting
\(
y=\lambda^2+\lambda^{-2},
\)
we obtain
\[
y^2+A_dy+(B_d-2)=0.
\]
For \(d=2m+1\ge3\), its discriminant is
\[
\Delta_y
=
\frac{16}{d^4}
\left[
d^4\cos^2\frac{\pi}{d}
-2d^2
+d^2s^2
+d(3d-2)s
+1
\right],
\qquad
s=\sin^2\frac{\pi}{d}.
\]
Since
\(
\cos^2\frac{\pi}{d}\ge\frac14
\)
and
\(
d^2\left(\frac{d^2}{4}-2\right)>0
\qquad(d\ge3),
\)
we have \(\Delta_y>0\). Thus the quadratic has two distinct real
roots \(y_+\) and \(y_-\), and
\begin{equation*}
%\label{eq:odd-characteristic-factorization}
\chi_m(\lambda)
=
\bigl(\lambda^4-y_+\lambda^2+1\bigr)
\bigl(\lambda^4-y_-\lambda^2+1\bigr).
\end{equation*}

Every root of either quartic factor is a root of \(\chi_m\), and hence
lies on \(\mathbb T\) by
\eqref{eq:odd-reduced-spectrum-unit-circle}. Therefore
\[
y_\pm
=
\lambda^2+\lambda^{-2}
=
2\operatorname{Re}(\lambda^2)
\in[-2,2].
\]
Moreover,
\[
\chi_m(\pm1)
=
\frac4{d^2}
\left[
4+(d-3)(d+1)
\sin^2\frac{\pi}{d}
\right]
>0,
\]
while
\[
\chi_m(\pm i)
=
\frac{4(d-1)(d+3)}{d^2}
\sin^2\frac{\pi}{d}
>0.
\]
Since \(y=2\) corresponds to \(\lambda=\pm1\) and
\(y=-2\) to \(\lambda=\pm i\), it follows that
\[
y_+,y_-\in(-2,2).
\]

For each \(y\in\{y_+,y_-\}\), setting \(z=\lambda^2\) reduces
\(
\lambda^4-y\lambda^2+1=0
\)
to
\(
z^2-yz+1=0.
\)
Since \(y\in(-2,2)\), this equation has two distinct conjugate roots
on \(\mathbb T\), and hence the corresponding quartic has four
distinct roots on \(\mathbb T\). Since \(y_+\neq y_-\), the two
quartic factors have no common root. Therefore \(\chi_m\) has eight
distinct roots on \(\mathbb T\).

In particular,
\begin{equation}
\label{eq:no-fourth-root-spectrum}
\{1,-1,i,-i\}
\cap
\operatorname{Spec}
\bigl(
\mathcal M_m^{(\varepsilon)}
\bigr)
=
\varnothing.
\end{equation}

\begin{corollary}
\label{cor:harmonic-spectral-classification}
If \(d=2m\), then
\[
\operatorname{Spec}\!\left(
\widetilde W_{\mathbf s}
\big|_{\mathcal H_{\mathrm{har}}^{\mathrm{even}}}
\right)
=
\{1\},
\]
with multiplicity \(C_m\).

If \(d=2m+1\), then
\[
\operatorname{Spec}\!\left(
\widetilde W_{\mathbf s}
\big|_{\mathcal H_{\mathrm{har}}^{\mathrm{odd}}}
\right)
=
\operatorname{Spec}
\bigl(
\mathcal M_m^{(\varepsilon)}
\bigr),
\]
which consists of eight distinct eigenvalues on \(\mathbb T\), each
with multiplicity exactly \(C_{m+1}\).
\end{corollary}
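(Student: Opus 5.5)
The plan is to read the corollary directly off the two structural theorems and the spectral analysis of $\chi_m$ just carried out.

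For $d=2m$, Theorem~\ref{thm:even-harmonic-subspace} shows that $\widetilde W_{\mathbf s}$ restricts to the identity on $\mathcal H_{\mathrm{har}}^{\mathrm{even}}$. That space has dimension $C_m\ge1$ by \eqref{eq:even-harmonic-subspace-dimension}. Hence the spectrum is exactly $\{1\}$, and its multiplicity equals the dimension $C_m$. Nonemptiness uses $C_m\ge1$.

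For $d=2m+1$, I start from the similarity \eqref{eq:odd-subspace-similarity} of Theorem~\ref{thm:odd-harmonic-subspace}. Because $\mathfrak T_m$ is a linear isomorphism onto $\mathcal H_{\mathrm{har}}^{\mathrm{odd}}$ and intertwines the two operators by \eqref{eq:odd-global-intertwining}, the restriction $\widetilde W_{\mathbf s}|_{\mathcal H_{\mathrm{har}}^{\mathrm{odd}}}$ is similar to $I_{\mathscr H_m^-}\otimes\mathcal M_m^{(\varepsilon)}$. Similar operators have the same spectrum with the same algebraic and geometric multiplicities. It therefore suffices to compute the spectrum of the Kronecker product $I_{C_{m+1}}\otimes\mathcal M_m^{(\varepsilon)}$, using $\dim\mathscr H_m^-=C_{m+1}$ from Corollary~\ref{thm:dim-harmonic-qbn}.

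I will show that the characteristic polynomial of this product is $\chi_m(\lambda)^{C_{m+1}}$. This holds because the product is block-diagonal with $C_{m+1}$ copies of $\mathcal M_m^{(\varepsilon)}$ after choosing a basis of $\mathscr H_m^-$. Hence the spectrum equals $\operatorname{Spec}(\mathcal M_m^{(\varepsilon)})$. The preceding analysis shows that $\chi_m$ has eight distinct roots on $\mathbb T$, so $\mathcal M_m^{(\varepsilon)}$ is diagonalizable with simple eigenvalues. Each eigenvalue therefore has algebraic multiplicity exactly $C_{m+1}$ in the product.

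Geometric multiplicity agrees with algebraic multiplicity here. The eigenspace for $\lambda$ is $\mathscr H_m^-\otimes\ker(\lambda-\mathcal M_m^{(\varepsilon)})$, which has dimension $C_{m+1}\cdot1$. This is also automatic from unitarity of $\widetilde W_{\mathbf s}$, since the restriction of a unitary to an invariant subspace of a finite-dimensional space is unitary and hence diagonalizable. The same argument covers both $\varepsilon=\pm1$, since $\chi_m$ does not depend on $\varepsilon$.

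The only point requiring care is that multiplicities transfer correctly under the similarity, which rests on the injectivity of $\mathfrak T_m$ established in Theorem~\ref{thm:odd-harmonic-subspace}. The distinctness of the eight eigenvalues, the substantive spectral input, is already proved before the statement. I expect no genuine obstacle beyond recording these facts.
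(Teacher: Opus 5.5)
Your proposal is correct and follows essentially the same route as the paper. The even case is read off from Theorem~\ref{thm:even-harmonic-subspace}, where the restriction is the identity on a $C_m$-dimensional space. The odd case comes from the similarity $\widetilde W_{\mathbf s}|_{\mathcal H_{\mathrm{har}}^{\mathrm{odd}}}\sim I_{\mathscr H_m^-}\otimes\mathcal M_m^{(\varepsilon)}$, together with $\dim\mathscr H_m^-=C_{m+1}$ and the distinctness of the eight roots of $\chi_m$ established just before the statement. Your explicit treatment of algebraic versus geometric multiplicity makes precise what the paper leaves implicit.
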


%\begin{proof}
%For \(d=2m\), the conclusion follows from
%Theorem~\ref{thm:even-harmonic-subspace}, which gives
%\[
%\widetilde W_{\mathbf s}
%\big|_{\mathcal H_{\mathrm{har}}^{\mathrm{even}}}
%=
%I
%\]
%and
%\[
%\dim
%\mathcal H_{\mathrm{har}}^{\mathrm{even}}
%=
%C_m.
%\]

%For \(d=2m+1\),
%Theorem~\ref{thm:odd-harmonic-subspace} gives
%\[
%\widetilde W_{\mathbf s}
%\big|_{\mathcal H_{\mathrm{har}}^{\mathrm{odd}}}
%\sim
%I_{\mathscr H_m^-}
%\otimes
%\mathcal M_m^{(\varepsilon)}.
%\]
%Since
%\[
%\dim\mathscr H_m^-=C_{m+1},
%\]
%and the eight eigenvalues of
%\(\mathcal M_m^{(\varepsilon)}\) are distinct, each occurs with
%multiplicity exactly \(C_{m+1}\) in the harmonic subspace.
%\end{proof}

We conclude with the unified parity-dependent form of the harmonic
dynamics.  Define
\[
\mathcal H_{\mathrm{har}}
:=
\begin{cases}
\mathcal H_{\mathrm{har}}^{\mathrm{even}},
& d=2m,\\[1mm]
\mathcal H_{\mathrm{har}}^{\mathrm{odd}},
& d=2m+1.
\end{cases}
\]
Under \eqref{eq:even-supporting-spin} when \(d=2m\) and
\eqref{eq:odd-supporting-spin} when \(d=2m+1\),
\begin{equation}
\widetilde W_{\mathbf s}
\big|_{\mathcal H_{\mathrm{har}}}
\sim
\begin{cases}
I_{\mathscr H_m^-},
& d=2m,\\[1mm]
I_{\mathscr H_m^-}
\otimes
\mathcal M_m^{(\varepsilon)},
& d=2m+1.
\end{cases}
\label{eq:parity-restricted-dynamics}
\end{equation}
Hence \(\mathcal H_{\mathrm{har}}\) is invariant under
\(\widetilde W_{\mathbf s}\).  In particular, for every
\(\Phi_0\in\mathcal H_{\mathrm{har}}\) and \(t\in\mathbb N\),
\begin{equation}
\widetilde W_{\mathbf s}^{\,t}\Phi_0
\in
\begin{cases}
\left(
V_{m-1}\oplus V_m\oplus V_{m+1}
\right)\otimes\mathcal K,
& d=2m,\\[1mm]
\left(
V_{m-1}\oplus V_m\oplus V_{m+1}\oplus V_{m+2}
\right)\otimes\mathcal K,
& d=2m+1.
\end{cases}
\label{eq:harmonic-layer-confinement}
\end{equation}

%%%%%%%%%%%%%%%%%%%%%%%%%%%%%%%%%%%%%%%%%%%%%%%%%%%%%%%%%%%%%%%%%%%
\section{Long-time behavior and spatial extension of harmonic states}
\label{sec:long-time-geometry}
%%%%%%%%%%%%%%%%%%%%%%%%%%%%%%%%%%%%%%%%%%%%%%%%%%%%%%%%%%%%%%%%%%%

In this section, we examine the walk from the perspective of
probability distributions.  We first determine the time-averaged
position distribution on the harmonic subspace.  We then show that an
arbitrary initial state with nonzero harmonic component retains a
quantitative long-time return memory.  Finally, we compare the harmonic
position distributions with the uniform measure on the hypercube and
show that their confinement produces a persistent obstruction to uniform
mixing.  This non-mixing behavior will subsequently be compared with
the spatial extension of harmonic states in Hamming distance.

%%%%%%%%%%%%%%%%%%%%%%%%%%%%%%%%%%%%%%%%%%%%%%%%%%%%%%%%%%%%%%%%%%%
\subsection{Time-averaged probability distribution}
\label{sec:averaged-probability}
%%%%%%%%%%%%%%%%%%%%%%%%%%%%%%%%%%%%%%%%%%%%%%%%%%%%%%%%%%%%%%%%%%%

Let
\(
\Phi_0\in\mathcal H_{\mathrm{har}}
\)
be a unit vector and write
\begin{equation*}
\Phi_t
:=
\widetilde W_{\mathbf s}^{\,t}\Phi_0,
\qquad
t\ge0.
%\label{eq:harmonic-time-state}
\end{equation*}
For \(\sigma\in\Gamma_{d-1}\), define
\begin{equation*}
P_t(\sigma\mid\Phi_0)
:=
\sum_{k=0}^{d-1}
\left|
\left\langle
Z_\sigma\otimes e_k,
\Phi_t
\right\rangle
\right|^2.
%\label{eq:harmonic-position-probability}
\end{equation*}
Thus
\(
\sigma\longmapsto P_t(\sigma\mid\Phi_0)
\)
is the position probability distribution of the walker at time \(t\).
For \(T\ge1\), define the time-averaged position distribution by
\begin{equation}
P_T(\sigma\mid\Phi_0)
:=
\frac1T
\sum_{t=0}^{T-1}
P_t(\sigma\mid\Phi_0),
\qquad
\sigma\in\Gamma_{d-1}.
\label{eq:harmonic-averaged-probability}
\end{equation}

By Corollary~\ref{cor:harmonic-spectral-classification}, when
\(d=2m+1\), let
\(
\lambda_1,\ldots,\lambda_8
\)
denote the eight distinct eigenvalues of
\(
\widetilde W_{\mathbf s}
\big|_{\mathcal H_{\mathrm{har}}^{\mathrm{odd}}},
\)
and let
\(
\mathsf E_1,\ldots,\mathsf E_8
\)
be the corresponding orthogonal spectral projections.  

The next theorem gives the limiting time-averaged distribution on the
QBN harmonic subspace.

\begin{theorem}
\label{thm:harmonic-averaged-probability}
If \(d=2m\), then for every unit vector
\(
\Phi_0\in\mathcal H_{\mathrm{har}}^{\mathrm{even}},
\)
we have
\begin{equation}
P_t(\sigma\mid\Phi_0)
=
P_0(\sigma\mid\Phi_0),
\qquad
\sigma\in\Gamma_{d-1},\quad t\ge0.
\label{eq:even-probability-stationary}
\end{equation}

If \(d=2m+1\), then, for every unit vector
\(
\Phi_0\in\mathcal H_{\mathrm{har}}^{\mathrm{odd}},
\)
and every \(\sigma\in\Gamma_{d-1}\),
\begin{equation}
\lim_{T\to\infty}
P_T(\sigma\mid\Phi_0)
=
\sum_{a=1}^{8}
\sum_{k=0}^{d-1}
\left|
\left\langle
Z_\sigma\otimes e_k,
\mathsf E_a\Phi_0
\right\rangle
\right|^2.
\label{eq:odd-averaged-probability-limit}
\end{equation}
\end{theorem}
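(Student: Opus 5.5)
For the even case I will use Theorem~\ref{thm:even-harmonic-subspace}, which says that every vector of \(\mathcal H_{\mathrm{har}}^{\mathrm{even}}\) is fixed by \(\widetilde W_{\mathbf s}\). Hence \(\Phi_t=\widetilde W_{\mathbf s}^{\,t}\Phi_0=\Phi_0\) for all \(t\ge0\). Every amplitude \(\langle Z_\sigma\otimes e_k,\Phi_t\rangle\) is then independent of \(t\), and \eqref{eq:even-probability-stationary} follows at once.

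For the odd case the plan has three steps.

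\textbf{Step 1: spectral decomposition.} By \eqref{eq:parity-restricted-dynamics}, \(\mathcal H_{\mathrm{har}}^{\mathrm{odd}}\) is invariant under \(\widetilde W_{\mathbf s}\). The restriction is therefore a unitary operator on a finite-dimensional space. By Corollary~\ref{cor:harmonic-spectral-classification} it has exactly the eight distinct eigenvalues \(\lambda_1,\ldots,\lambda_8\in\mathbb T\). Unitarity makes the restriction normal, so the spectral projections \(\mathsf E_a\) are orthogonal, sum to the identity on \(\mathcal H_{\mathrm{har}}^{\mathrm{odd}}\), and satisfy
\[
\Phi_t=\sum_{a=1}^8\lambda_a^t\,\mathsf E_a\Phi_0 .
\]

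\textbf{Step 2: expand the probabilities.} For fixed \(\sigma\) and \(k\), set \(c_a:=\langle Z_\sigma\otimes e_k,\mathsf E_a\Phi_0\rangle\). Then
\[
\bigl|\langle Z_\sigma\otimes e_k,\Phi_t\rangle\bigr|^2=\sum_{a,b}\lambda_a^t\overline{\lambda_b}^{\,t}\,c_a\overline{c_b}.
\]
The terms with \(a=b\) give \(|c_a|^2\), because \(|\lambda_a|=1\).

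\textbf{Step 3: average in time.} For \(a\ne b\), put \(\mu=\lambda_a\overline{\lambda_b}\). Then \(\mu\in\mathbb T\), and \(\mu\ne1\) because the eigenvalues are distinct. This distinctness is the one point that genuinely needs the earlier spectral analysis. The geometric sum gives
\[
\Bigl|\frac1T\sum_{t=0}^{T-1}\mu^t\Bigr|=\frac{|1-\mu^T|}{T|1-\mu|}\le\frac{2}{T|1-\mu|}\longrightarrow0 .
\]

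Combining the steps, summing over \(k\) and taking \(T\to\infty\) in \eqref{eq:harmonic-averaged-probability} yields \eqref{eq:odd-averaged-probability-limit}.

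I expect no serious obstacle. The only care needed is to justify that the \(\mathsf E_a\) are orthogonal projections and that Step 1 holds. Both follow from the unitarity of \(\widetilde W_{\mathbf s}\) on the invariant subspace, even though the similarity in Theorem~\ref{thm:odd-harmonic-subspace} is only linear and not unitary.
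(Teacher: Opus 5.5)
Your proposal is correct and follows essentially the same route as the paper: the even case comes from Theorem~\ref{thm:even-harmonic-subspace}, and the odd case comes from the expansion \(\Phi_t=\sum_a\lambda_a^t\mathsf E_a\Phi_0\), followed by Ces\`aro averaging of the cross terms \((\lambda_a\overline{\lambda_b})^t\), which vanish because the eigenvalues are distinct. The differences are cosmetic: you work with the scalar amplitudes for each pair \((\sigma,k)\) instead of the paper's coin vectors \(v_a(\sigma)\in\mathcal K\), and you make the \(O(1/T)\) decay of the off-diagonal averages explicit.
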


\begin{proof}
The even-dimensional statement follows immediately from
Theorem~\ref{thm:even-harmonic-subspace}. Now let \(d=2m+1\). 
\[
\widetilde W_{\mathbf s}^{\,t}\Phi_0
=
\sum_{a=1}^{8}
\lambda_a^t\mathsf E_a\Phi_0.
\]
For fixed \(\sigma\in\Gamma_{d-1}\), define
\[
v_a(\sigma)
:=
\sum_{k=0}^{d-1}
\left\langle
Z_\sigma\otimes e_k,
\mathsf E_a\Phi_0
\right\rangle e_k
\in\mathcal K.
\]
Then
\[
P_t(\sigma\mid\Phi_0)
=
\left\|
\sum_{a=1}^{8}
\lambda_a^t v_a(\sigma)
\right\|^2,
\]
and hence
\[
P_t(\sigma\mid\Phi_0)
=
\sum_{a,b=1}^{8}
(\lambda_a\overline{\lambda_b})^t
\left\langle
v_b(\sigma),v_a(\sigma)
\right\rangle.
\]
Since the eigenvalues are distinct,
\[
\lim_{T\to\infty}
\frac1T
\sum_{t=0}^{T-1}
(\lambda_a\overline{\lambda_b})^t
=
\begin{cases}
1,&a=b,\\
0,&a\ne b.
\end{cases}
\]
Therefore
\[
\lim_{T\to\infty}
P_T(\sigma\mid\Phi_0)
=
\sum_{a=1}^{8}
\|v_a(\sigma)\|^2.
\]
Since
\[
\|v_a(\sigma)\|^2
=
\sum_{k=0}^{d-1}
\left|
\left\langle
Z_\sigma\otimes e_k,
\mathsf E_a\Phi_0
\right\rangle
\right|^2,
\]
this gives
\eqref{eq:odd-averaged-probability-limit}.
\end{proof}

%%%%%%%%%%%%%%%%%%%%%%%%%%%%%%%%%%%%%%%%%%%%%%%%%%%%%%%%%%%%%%%%%%%
\subsection{Long-time return memory}
\label{sec:return-memory}
%%%%%%%%%%%%%%%%%%%%%%%%%%%%%%%%%%%%%%%%%%%%%%%%%%%%%%%%%%%%%%%%%%%

For any unit vector
\[
\Psi_0
\in
\mathfrak h_{d-1}\otimes\mathcal K,
\]
define the return probability by
\begin{equation}
R_t(\Psi_0)
:=
\left|
\left\langle
\Psi_0,
\widetilde W_{\mathbf s}^{\,t}\Psi_0
\right\rangle
\right|^2.
\label{eq:harmonic-return-probability}
\end{equation}
Let
\(
\mathsf Q_{\mathrm{har}}
\)
denote the orthogonal projection onto
\(\mathcal H_{\mathrm{har}}\), and define
\begin{equation}
\eta(\Psi_0)
:=
\|\mathsf Q_{\mathrm{har}}\Psi_0\|^2.
\label{eq:harmonic-overlap}
\end{equation}
Thus
\[
0\le\eta(\Psi_0)\le1,
\]
with
\(
\eta(\Psi_0)=1
\)
when
\(\Psi_0\in\mathcal H_{\mathrm{har}}\).

The harmonic subspace gives a quantitative lower bound for the
time-averaged return probability of an arbitrary initial state.

\begin{theorem}
\label{thm:general-harmonic-return-memory}
For every unit
vector
\(
\Psi_0
\in
\mathfrak h_{d-1}\otimes\mathcal K,
\)
we have
\begin{equation}
\lim_{T\to\infty}
\frac1T
\sum_{t=0}^{T-1}
R_t(\Psi_0)
\ge
\begin{cases}
\eta(\Psi_0)^2,
& d=2m,\\[2mm]
\dfrac{\eta(\Psi_0)^2}{8},
& d=2m+1.
\end{cases}
\label{eq:general-harmonic-return-bound}
\end{equation}
\end{theorem}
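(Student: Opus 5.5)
The plan is to work with the spectral decomposition of the full reduced walk \(\widetilde W_{\mathbf s}\) on the finite-dimensional space \(\mathfrak h_{d-1}\otimes\mathcal K\), where it is unitary. Write
\[
\widetilde W_{\mathbf s}=\sum_{\mu\in\operatorname{Spec}(\widetilde W_{\mathbf s})}\mu\,\mathsf P_\mu ,
\]
with distinct eigenvalues \(\mu\) on \(\mathbb T\) and orthogonal spectral projections \(\mathsf P_\mu\).

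\emph{Step 1: the averaged return probability.} Expanding gives
\[
R_t(\Psi_0)=\sum_{\mu,\nu}(\mu\overline\nu)^t\,\|\mathsf P_\mu\Psi_0\|^2\|\mathsf P_\nu\Psi_0\|^2 .
\]
As in the proof of Theorem~\ref{thm:harmonic-averaged-probability}, the Ces\`aro average of \((\mu\overline\nu)^t\) tends to \(\delta_{\mu\nu}\). Hence the limit in \eqref{eq:general-harmonic-return-bound} exists and equals
\[
\sum_\mu\|\mathsf P_\mu\Psi_0\|^4 .
\]

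\emph{Step 2: \(\mathsf Q_{\mathrm{har}}\) commutes with the spectral projections.} By \eqref{eq:parity-restricted-dynamics}, \(\mathcal H_{\mathrm{har}}\) is invariant under \(\widetilde W_{\mathbf s}\). Since \(\widetilde W_{\mathbf s}\) is unitary and the space is finite-dimensional, the restriction to \(\mathcal H_{\mathrm{har}}\) is a bijection onto it, so \(\mathcal H_{\mathrm{har}}\) is also invariant under \(\widetilde W_{\mathbf s}^*\). Therefore it is a reducing subspace: \(\mathsf Q_{\mathrm{har}}\) commutes with \(\widetilde W_{\mathbf s}\). Because each \(\mathsf P_\mu\) is a polynomial in \(\widetilde W_{\mathbf s}\), \(\mathsf Q_{\mathrm{har}}\) commutes with every \(\mathsf P_\mu\).

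This gives
\[
\|\mathsf P_\mu\Psi_0\|^2\ge\|\mathsf Q_{\mathrm{har}}\mathsf P_\mu\Psi_0\|^2=\|\mathsf P_\mu\mathsf Q_{\mathrm{har}}\Psi_0\|^2 .
\]
Moreover, \(\mathsf P_\mu\mathsf Q_{\mathrm{har}}\Psi_0\neq0\) only when \(\mu\in\operatorname{Spec}(\widetilde W_{\mathbf s}|_{\mathcal H_{\mathrm{har}}})\). For such \(\mu\), \(\mathsf P_\mu\mathsf Q_{\mathrm{har}}\) is the spectral projection of the restriction; in the odd case it is \(\mathsf E_a\mathsf Q_{\mathrm{har}}\).

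\emph{Step 3: the bound.} Apply Corollary~\ref{cor:harmonic-spectral-classification}.
\begin{itemize}
\item If \(d=2m\), the restriction has the single eigenvalue \(1\), so \(\mathsf P_1\mathsf Q_{\mathrm{har}}\Psi_0=\mathsf Q_{\mathrm{har}}\Psi_0\). Dropping all other terms,
\[
\sum_\mu\|\mathsf P_\mu\Psi_0\|^4\ge\|\mathsf Q_{\mathrm{har}}\Psi_0\|^4=\eta(\Psi_0)^2 .
\]
\item If \(d=2m+1\), only the eight eigenvalues \(\lambda_1,\dots,\lambda_8\) occur. Set \(a_j:=\|\mathsf P_{\lambda_j}\mathsf Q_{\mathrm{har}}\Psi_0\|^2\), so that \(\sum_{j=1}^8 a_j=\eta(\Psi_0)\). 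Keeping only these terms and applying Cauchy--Schwarz,
\[
\sum_\mu\|\mathsf P_\mu\Psi_0\|^4\ge\sum_{j=1}^8a_j^2\ge\frac18\Bigl(\sum_{j=1}^8a_j\Bigr)^2=\frac{\eta(\Psi_0)^2}{8} .
\]
\end{itemize}

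The only genuinely delicate point is Step 2, and in particular making sure the invariance statement \eqref{eq:parity-restricted-dynamics} holds for the operator \(\widetilde W_{\mathbf s}\) on the whole space \(\mathfrak h_{d-1}\otimes\mathcal K\), not just for its factorized form on the supporting layers. The rest is a standard RAGE/Ces\`aro computation.
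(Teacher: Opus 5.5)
Your proposal is correct and follows essentially the same route as the paper. You use the spectral resolution of the full walk to get the Ces\`aro limit $\sum_\zeta p_\zeta^2$, then use that $\mathcal H_{\mathrm{har}}$ is reducing to bound the relevant weights below by the harmonic spectral weights. You finish with $p_1\ge\eta(\Psi_0)$ in even dimension and with Cauchy--Schwarz over the eight eigenvalues in odd dimension, exactly as the paper does. Your justification of $\|\mathsf P_\mu\Psi_0\|^2\ge\|\mathsf P_\mu\mathsf Q_{\mathrm{har}}\Psi_0\|^2$, via the commutation of $\mathsf Q_{\mathrm{har}}$ with the spectral projections, is a slightly more explicit version of a step the paper states in one line.
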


\begin{proof}
Since \(\mathcal H_{\mathrm{har}}\) is invariant under
\(\widetilde W_{\mathbf s}\) and the restricted dynamics is unitary,
\(\mathcal H_{\mathrm{har}}\) is a reducing subspace.  Hence
\(\mathsf Q_{\mathrm{har}}\) commutes with
\(\widetilde W_{\mathbf s}\).

Let
\[
\widetilde W_{\mathbf s}
=
\sum_{\zeta}
\zeta\,\mathsf F_\zeta
\]
be the spectral resolution of the full walk, where the sum runs over
the distinct eigenvalues and \(\mathsf F_\zeta\) denotes the
corresponding orthogonal spectral projection.  Set
\[
p_\zeta
:=
\|\mathsf F_\zeta\Psi_0\|^2.
\]
Then
\[
\left\langle
\Psi_0,
\widetilde W_{\mathbf s}^{\,t}\Psi_0
\right\rangle
=
\sum_{\zeta}
p_\zeta\zeta^t.
\]
By Ces\`aro averaging,
\begin{equation}
\lim_{T\to\infty}
\frac1T
\sum_{t=0}^{T-1}
R_t(\Psi_0)
=
\sum_{\zeta}p_\zeta^2.
\label{eq:general-return-spectral-formula}
\end{equation}

Suppose first that \(d=2m\).  By
Theorem~\ref{thm:even-harmonic-subspace},
\(\mathcal H_{\mathrm{har}}^{\mathrm{even}}\) is contained in the
eigenspace of \(\widetilde W_{\mathbf s}\) corresponding to the
eigenvalue \(1\).  Hence
\[
p_1
\ge
\|\mathsf Q_{\mathrm{har}}\Psi_0\|^2
=
\eta(\Psi_0).
\]
Therefore
\[
\sum_{\zeta}p_\zeta^2
\ge
p_1^2
\ge
\eta(\Psi_0)^2.
\]

Now let \(d=2m+1\).  Define
\begin{equation}
q_a(\Psi_0)
:=
\left\|
\mathsf E_a
\mathsf Q_{\mathrm{har}}\Psi_0
\right\|^2,
\qquad
1\le a\le8.
\label{eq:general-harmonic-spectral-weights}
\end{equation}
then
\[
\sum_{a=1}^{8}
q_a(\Psi_0)
=
\|\mathsf Q_{\mathrm{har}}\Psi_0\|^2
=
\eta(\Psi_0).
\]
Moreover, since
\(\mathsf E_a\mathsf Q_{\mathrm{har}}\) lies in the full spectral
subspace corresponding to \(\lambda_a\),
\[
p_{\lambda_a}
\ge
q_a(\Psi_0).
\]
It follows from \eqref{eq:general-return-spectral-formula} that
\[
\lim_{T\to\infty}
\frac1T
\sum_{t=0}^{T-1}
R_t(\Psi_0)
\ge
\sum_{a=1}^{8}
q_a(\Psi_0)^2.
\]
By the Cauchy--Schwarz inequality,
\[
\eta(\Psi_0)^2
=
\left(
\sum_{a=1}^{8}q_a(\Psi_0)
\right)^2
\le
8
\sum_{a=1}^{8}
q_a(\Psi_0)^2.
\]
Hence
\[
\lim_{T\to\infty}
\frac1T
\sum_{t=0}^{T-1}
R_t(\Psi_0)
\ge
\frac{\eta(\Psi_0)^2}{8}.
\]
\end{proof}

For initial states entirely contained in the harmonic subspace,
Theorem~\ref{thm:general-harmonic-return-memory} reduces to the
following dimension-independent bounds.

\begin{corollary}
\label{cor:harmonic-return-memory}
If \(d=2m+1\), then for every unit vector
\(
\Phi_0
\in
\mathcal H_{\mathrm{har}}^{\mathrm{odd}},
\)
we have
\begin{equation}
\lim_{T\to\infty}
\frac1T
\sum_{t=0}^{T-1}
R_t(\Phi_0)
=
\sum_{a=1}^{8}q_a^2
\ge
\frac18,
\label{eq:harmonic-mean-return}
\end{equation}
where
\(
q_a
:=
\|\mathsf E_a\Phi_0\|^2,
1\le a\le8.
\)

If \(d=2m\), for every unit vector
\(
\Phi_0
\in
\mathcal H_{\mathrm{har}}^{\mathrm{even}},
\)
we have
\begin{equation}
R_t(\Phi_0)
=
1,
\qquad
t\ge0.
\label{eq:even-return-probability}
\end{equation}
\end{corollary}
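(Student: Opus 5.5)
The plan is to specialize the spectral computation in the proof of Theorem~\ref{thm:general-harmonic-return-memory} to initial states lying entirely in \(\mathcal H_{\mathrm{har}}\), treating the two parities separately.

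\textbf{Even case.} For \(d=2m\), Theorem~\ref{thm:even-harmonic-subspace} gives \(\widetilde W_{\mathbf s}^{\,t}\Phi_0=\Phi_0\) for every \(t\ge0\). Since \(\Phi_0\) is a unit vector, \(\langle\Phi_0,\widetilde W_{\mathbf s}^{\,t}\Phi_0\rangle=\|\Phi_0\|^2=1\), and hence \(R_t(\Phi_0)=1\). This proves \eqref{eq:even-return-probability}.

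\textbf{Odd case, spectral decomposition.} For \(d=2m+1\), the subspace \(\mathcal H_{\mathrm{har}}^{\mathrm{odd}}\) is invariant and the restricted walk is unitary. By Corollary~\ref{cor:harmonic-spectral-classification}, the restriction has exactly eight distinct eigenvalues \(\lambda_1,\ldots,\lambda_8\) with orthogonal spectral projections \(\mathsf E_a\). These sum to the identity on \(\mathcal H_{\mathrm{har}}^{\mathrm{odd}}\) and are mutually orthogonal. Therefore
\[
\langle\Phi_0,\widetilde W_{\mathbf s}^{\,t}\Phi_0\rangle=\sum_{a=1}^8 q_a\lambda_a^t,
\qquad \sum_{a=1}^8 q_a=\|\Phi_0\|^2=1 .
\]
Expanding the modulus squared gives
\[
R_t(\Phi_0)=\sum_{a,b} q_aq_b(\lambda_a\overline{\lambda_b})^t .
\]

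\textbf{Odd case, Ces\`aro average and bound.} Because the \(\lambda_a\) are distinct and lie on \(\mathbb T\), the Ces\`aro average of \((\lambda_a\overline{\lambda_b})^t\) tends to \(\delta_{ab}\), exactly as in the proof of Theorem~\ref{thm:harmonic-averaged-probability}. The limit therefore exists and equals \(\sum_a q_a^2\). Cauchy--Schwarz then gives \(1=(\sum_a q_a)^2\le 8\sum_a q_a^2\), which yields the bound \(\tfrac18\). Alternatively, the inequality follows directly from Theorem~\ref{thm:general-harmonic-return-memory} with \(\eta(\Phi_0)=1\); the explicit computation is still needed for the equality in \eqref{eq:harmonic-mean-return}.

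\textbf{Main subtlety.} The only step requiring care is the claim that the \(\mathsf E_a\) are orthogonal projections, so that the \(q_a\) sum to \(1\). This holds because the restricted operator is unitary, hence normal, on the invariant subspace. The similarity \eqref{eq:odd-subspace-similarity} is not unitary, so orthogonality must be taken from unitarity of the restriction and not from \(\mathcal M_m^{(\varepsilon)}\) itself.
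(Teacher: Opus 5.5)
Your proposal is correct and follows essentially the paper's route: the corollary is the specialization of the spectral/Ces\`aro computation in the proof of Theorem~\ref{thm:general-harmonic-return-memory} to \(\eta(\Phi_0)=1\). For \(\Phi_0\in\mathcal H_{\mathrm{har}}^{\mathrm{odd}}\) the spectral weights reduce exactly to \(q_a\), which gives the equality, and Cauchy--Schwarz gives \(\tfrac18\), while the even case is immediate from the fixed-point property. Your remark that orthogonality of the \(\mathsf E_a\) must come from unitarity of the restriction, not from the non-unitary similarity with \(\mathcal M_m^{(\varepsilon)}\), is a correct and useful clarification.
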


%%%%%%%%%%%%%%%%%%%%%%%%%%%%%%%%%%%%%%%%%%%%%%%%%%%%%%%%%%%%%%%%%%%
\subsection{Obstruction to uniform mixing}
\label{sec:absence-uniform-mixing}
%%%%%%%%%%%%%%%%%%%%%%%%%%%%%%%%%%%%%%%%%%%%%%%%%%%%%%%%%%%%%%%%%%%

Let \(\kappa_d\) denote the uniform probability measure on the
\(d\)-dimensional hypercube:
\[
\kappa_d(\sigma)
=
2^{-d},
\qquad
\sigma\in\Gamma_{d-1}.
\]
For probability measures \(\mu\) and \(\nu\) on
\(\Gamma_{d-1}\), write
\[
\|\mu-\nu\|_{\mathrm{TV}}
:=
\frac12
\sum_{\sigma\in\Gamma_{d-1}}
|\mu(\sigma)-\nu(\sigma)|.
\]

Define
\[
\Lambda_d^{\mathrm{har}}
:=
\begin{cases}
\Omega_{m-1}\cup\Omega_m\cup\Omega_{m+1},
& d=2m,\\[1mm]
\Omega_{m-1}\cup\Omega_m\cup
\Omega_{m+1}\cup\Omega_{m+2},
& d=2m+1.
\end{cases}
\]

\begin{theorem}
\label{thm:harmonic-nonmixing}
For every unit vector
\(
\Phi_0\in\mathcal H_{\mathrm{har}},
\)
we have
\[
\left\|
P_t(\,\cdot\mid\Phi_0)-\kappa_d
\right\|_{\mathrm{TV}}
\ge
1-\vartheta_d,
\qquad
t\ge0,
\]
and
\[
\left\|
P_T(\,\cdot\mid\Phi_0)-\kappa_d
\right\|_{\mathrm{TV}}
\ge
1-\vartheta_d,
\qquad
T\ge1,
\]
where
\[
\vartheta_d
:=
\frac{|\Lambda_d^{\mathrm{har}}|}{2^d}
=
\begin{cases}
\displaystyle
\frac{3m+1}{m+1}
\frac{\binom{2m}{m}}{4^m},
& d=2m,\\[4mm]
\displaystyle
\frac{2(2m+1)}{m+2}
\frac{\binom{2m}{m}}{4^m},
& d=2m+1.
\end{cases}
\]
In particular,
\[
\inf_{\substack{
\Phi_0\in\mathcal H_{\mathrm{har}}\\
\|\Phi_0\|=1
}}
\;
\inf_{t\ge0}
\left\|
P_t(\,\cdot\mid\Phi_0)-\kappa_d
\right\|_{\mathrm{TV}}
\longrightarrow
1
\qquad
(d\to\infty).
\]
\end{theorem}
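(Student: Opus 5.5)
The argument rests on a single support observation plus an elementary total-variation bound. By the layer confinement \eqref{eq:harmonic-layer-confinement}, for every unit vector \(\Phi_0\in\mathcal H_{\mathrm{har}}\) and every \(t\ge0\), the state \(\Phi_t\) lies in \(\bigl(\bigoplus_{r}V_r\bigr)\otimes\mathcal K\), where \(r\) ranges over the two central supporting layers together with their neighbours. Hence \(P_t(\sigma\mid\Phi_0)=0\) for every \(\sigma\notin\Lambda_d^{\mathrm{har}}\). Averaging over \(t\), the time-averaged distribution \(P_T(\cdot\mid\Phi_0)\) is a probability measure supported in \(\Lambda_d^{\mathrm{har}}\) as well.

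The key general fact concerns a probability measure \(\mu\) supported in a set \(\Lambda\). For such \(\mu\), the equivalent form \(\|\mu-\nu\|_{\mathrm{TV}}=\max_{E}|\mu(E)-\nu(E)|\) gives, with \(E=\Lambda\),
\[
\|\mu-\kappa_d\|_{\mathrm{TV}}\ \ge\ \mu(\Lambda)-\kappa_d(\Lambda)=1-\frac{|\Lambda|}{2^d}.
\]
One can also see this directly: the half-sum is at least \(\tfrac12\sum_{\sigma\notin\Lambda}2^{-d}\) plus \(\tfrac12(1-|\Lambda|2^{-d})\). Applying this with \(\Lambda=\Lambda_d^{\mathrm{har}}\) to \(P_t\) and to \(P_T\) yields both inequalities, uniformly in \(\Phi_0\), \(t\) and \(T\).

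The remaining work is counting. For \(d=2m\),
\[
|\Lambda_d^{\mathrm{har}}|=\binom{2m}{m}+2\binom{2m}{m-1}=\binom{2m}{m}\Bigl(1+\frac{2m}{m+1}\Bigr)=\frac{3m+1}{m+1}\binom{2m}{m},
\]
and \(2^d=4^m\). For \(d=2m+1\),
\[
|\Lambda_d^{\mathrm{har}}|=2\binom{2m+1}{m}+2\binom{2m+1}{m-1}.
\]
Here \(\binom{2m+1}{m}=\frac{2m+1}{m+1}\binom{2m}{m}\) and \(\binom{2m+1}{m-1}=\frac{m}{m+2}\binom{2m+1}{m}\), so
\[
|\Lambda_d^{\mathrm{har}}|=2\binom{2m+1}{m}\cdot\frac{2m+2}{m+2}=\frac{4(2m+1)}{m+2}\binom{2m}{m}.
\]
Dividing by \(2^{2m+1}\) gives the stated \(\frac{2(2m+1)}{m+2}\binom{2m}{m}4^{-m}\).

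For the limit, \(\binom{2m}{m}4^{-m}\le 1/\sqrt{\pi m}\to0\), while the rational prefactors stay bounded (by \(3\) and \(4\) respectively). Hence \(\vartheta_d\to0\). Since total variation is at most \(1\), the infimum is squeezed between \(1-\vartheta_d\) and \(1\), so it tends to \(1\). There is no serious obstacle; the only step needing care is the binomial identity in the odd case, which I would double-check with \(m=1\), \(d=3\). In that case all \(8\) vertices are covered, so \(\vartheta_3=1\), matching \(\frac{2\cdot3}{3}\cdot\frac{2}{4}=1\).
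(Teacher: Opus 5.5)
Your proposal is correct and follows the paper's proof. Layer confinement puts both \(P_t\) and \(P_T\) on \(\Lambda_d^{\mathrm{har}}\), which gives the lower bound \(1-\kappa_d(\Lambda_d^{\mathrm{har}})\), and \(\binom{2m}{m}4^{-m}\sim(\pi m)^{-1/2}\) forces \(\vartheta_d\to0\). Your explicit binomial counts of \(|\Lambda_d^{\mathrm{har}}|\) and the squeeze against \(\|\cdot\|_{\mathrm{TV}}\le1\) supply details the paper leaves implicit, and both check out, including \(\vartheta_3=1\).
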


\begin{proof}
By the layer confinement
\eqref{eq:harmonic-layer-confinement},
both \(P_t(\,\cdot\mid\Phi_0)\) and
\(P_T(\,\cdot\mid\Phi_0)\) are supported on
\(\Lambda_d^{\mathrm{har}}\).  Hence
\[
\left\|
P_t(\,\cdot\mid\Phi_0)-\kappa_d
\right\|_{\mathrm{TV}}
\ge
1-\kappa_d(\Lambda_d^{\mathrm{har}})
=
1-\vartheta_d,
\]
and the same estimate holds for \(P_T\).  Finally,
\[
\frac{\binom{2m}{m}}{4^m}
\sim
\frac1{\sqrt{\pi m}},
\]
so \(\vartheta_d\to0\).
\end{proof}

\begin{remark}
Theorem~\ref{thm:harmonic-nonmixing} contrasts with the uniform
distribution arising for the QBN hypercube walk considered in
\cite{WangHypercube2022}.  Here the harmonic dynamics remains confined
to central Hamming layers whose relative volume tends to zero.
Consequently, both the instantaneous and the time-averaged position
distributions remain asymptotically maximally separated from the
uniform measure.
\end{remark}
%%%%%%%%%%%%%%%%%%%%%%%%%%%%%%%%%%%%%%%%%%%%%%%%%%%%%%%%%%%%%%%%%%%
\subsection{Hamming-metric extension}
\label{sec:hamming-metric-extension}
%%%%%%%%%%%%%%%%%%%%%%%%%%%%%%%%%%%%%%%%%%%%%%%%%%%%%%%%%%%%%%%%%%%

The preceding result concerns the relative volume of the position
support.  We now show that small relative volume does not imply
concentration near a single vertex.

For \(\sigma,\tau\in\Gamma_{d-1}\), let
\[
d_H(\sigma,\tau)
:=
|\sigma\triangle\tau|
\]
be the Hamming distance, and write
\[
\operatorname{diam}_H(\operatorname{supp}P)
:=
\max_{\sigma,\tau\in\operatorname{supp}P}
d_H(\sigma,\tau).
\]

We first record a complement relation for the central harmonic
spaces.  Recall that the QBN coordinate shifts satisfy
\(
\Xi_k Z_\sigma
=
Z_{\sigma\triangle\{k\}},
\)
and hence
\begin{equation}
\Xi_0\Xi_1\cdots\Xi_{d-1}Z_\sigma
=
Z_{\sigma^c}.
\label{eq:qbn-total-coordinate-flip}
\end{equation}

\begin{lemma}
\label{lem:harmonic-complement}
If \(d=2m\), then
\begin{equation}
\Xi_0\Xi_1\cdots\Xi_{2m-1}f
=
(-1)^m f,
\qquad
f\in\mathscr H_m^-.
\label{eq:even-qbn-complement}
\end{equation}

If \(d=2m+1\), then
\begin{equation}
\Xi_0\Xi_1\cdots\Xi_{2m}f
=
(-1)^m A^+f,
\qquad
f\in\mathscr H_m^-.
\label{eq:odd-qbn-complement}
\end{equation}
Consequently,
\[
\operatorname{supp}(A^+f)
=
\{
\sigma^c:
\sigma\in\operatorname{supp}f
\}.
\]
\end{lemma}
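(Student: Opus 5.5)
The plan is to study the total coordinate flip
\(J:=\Xi_0\Xi_1\cdots\Xi_{d-1}\), which by \eqref{eq:qbn-total-coordinate-flip} acts as \(JZ_\sigma=Z_{\sigma^c}\). I will show that it exchanges the two collective QBN operators, then use Schur's lemma on the Specht modules of Theorem~\ref{thm:odd-harmonic-completeness} (and its even analogue), and finally fix the scalar by evaluating on one explicit harmonic vector.

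\emph{Step 1: \(J\) intertwines \(A^-\) and \(A^+\) and commutes with \(S_d\).}
\(J\) is a self-adjoint unitary and maps \(V_r\) onto \(V_{d-r}\). A one-line check on basis vectors gives
\[
J\partial_kJ=\partial_k^*.
\]
Indeed, \(J\partial_kJZ_\sigma=\mathbf 1_{\sigma^c}(k)Z_{\sigma\cup\{k\}}=\partial_k^*Z_\sigma\). Summing over \(k\) gives \(JA^-J=A^+\).

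Hence \(J\) maps \(\mathscr H_r^-\) onto \(\mathscr H_{d-r}^+\). Also \(J\rho_\pi=\rho_\pi J\), because \(\pi(\sigma^c)=\pi(\sigma)^c\).

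For \(d=2m\), \(J\) is therefore an \(S_d\)-equivariant involution of \(\mathscr H_m^-=\mathscr H_m^+\). The argument in the proof of Theorem~\ref{thm:odd-harmonic-completeness} applies verbatim with \(d=2m\). By \eqref{eq:slice-specht-decomposition}, the only summand of \(V_m\) absent from \(V_{m-1}\) is \(S^{(m,m)}\), so \(\mathscr H_m^-\cong S^{(m,m)}\) is irreducible. Schur's lemma then gives \(J|_{\mathscr H_m^-}=c\,I\) with \(c=\pm1\).

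For \(d=2m+1\), both \(J\) and \(A^+\) are \(S_d\)-equivariant isomorphisms \(\mathscr H_m^-\to\mathscr H_{m+1}^+\), the latter by Theorem~\ref{thm:harmonic-qbn}. The source is irreducible by \eqref{eq:harmonic-specht}. Hence \(J=c\,A^+\) on \(\mathscr H_m^-\) for a scalar \(c\), and \(|c|=1\) since both maps are isometric.

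\emph{Step 2: computing the scalar.}
Pair up \(2m\) of the coordinates as \(\{a_i,b_i\}\), \(1\le i\le m\); when \(d=2m+1\), let \(c_0\) denote the remaining coordinate. Define
\[
f=\sum_{\sigma}\epsilon(\sigma)Z_\sigma,
\]
where \(\sigma\) runs over the \(2^m\) sets containing exactly one element from each pair, and \(\epsilon(\sigma)=(-1)^{\#\{i:\,b_i\in\sigma\}}\). In other words, \(f\) is the expansion of \(\prod_i(\delta_{a_i}-\delta_{b_i})\).

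\(f\) is harmonic. Deleting \(a_i\) from a set with sign \(s\), and deleting \(b_i\) from the set with the same other choices and sign \(-s\), produce the same set. These contributions cancel, so \(A^-f=0\), and \(f\neq0\).

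Complementation swaps the choice in every pair, and for odd \(d\) it also adds \(c_0\). This multiplies \(\epsilon\) by \((-1)^m\).

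\emph{Case \(d=2m\).} The complement of an admissible set is again admissible, so \(Jf=(-1)^mf\) and \(c=(-1)^m\).

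\emph{Case \(d=2m+1\).} Compute \(A^+f\). Adding \(b_i\) to a set containing \(a_i\) cancels against adding \(a_i\) to the partner set, by the same sign argument as before. So only the addition of \(c_0\) survives:
\[
A^+f=\sum_\sigma\epsilon(\sigma)Z_{\sigma\cup\{c_0\}}.
\]
On the other hand, \(Jf=\sum_\sigma\epsilon(\sigma)Z_{\sigma^c}\). Writing \(\sigma^c=\sigma'\cup\{c_0\}\), where \(\sigma'\) is the opposite choice with \(\epsilon(\sigma')=(-1)^m\epsilon(\sigma)\), gives \(Jf=(-1)^mA^+f\). Hence \(c=(-1)^m\).

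\emph{Step 3: supports.} \(J\) permutes the basis \(\{Z_\sigma\}\) via \(\sigma\mapsto\sigma^c\), so \(\operatorname{supp}(Jf)=\{\sigma^c:\sigma\in\operatorname{supp}f\}\). Since \(A^+f=(-1)^mJf\), the support claim follows.

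The main point needing care is the irreducibility input for the even case, which the paper only stated for odd \(d\). It follows from the same layer comparison via \eqref{eq:slice-specht-decomposition}. Alternatively, one could avoid Schur entirely by noting that the \(S_d\)-translates of the explicit vector \(f\) span \(\mathscr H_m^-\) (Specht polytabloids), and then verifying the identity on each translate directly.
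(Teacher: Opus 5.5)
Your proposal is correct and takes essentially the same route as the paper: Schur's lemma on the irreducible module \(\mathscr H_m^-\cong S^{(m,m)}\) (even \(d\)) or \(S^{(m+1,m)}\) (odd \(d\)), with the scalar fixed by the transversal vector \(\prod_j(\partial_{a_j}^*-\partial_{b_j}^*)Z_\varnothing\), and the support statement read off from \(\Xi_0\cdots\Xi_{d-1}Z_\sigma=Z_{\sigma^c}\). Two of your steps fill in points the paper only asserts: the check that \(J\partial_kJ=\partial_k^*\), so that \(J\) maps \(\mathscr H_r^-\) onto \(\mathscr H_{d-r}^+\), and the derivation of the even-case irreducibility of \(\mathscr H_m^-\) from the layer decomposition \eqref{eq:slice-specht-decomposition}.
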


\begin{proof}
Let first \(d=2m\).  By
\eqref{eq:qbn-total-coordinate-flip}, the operator
\(
\Xi_0\Xi_1\cdots\Xi_{2m-1}
\)
preserves
the central harmonic space \(\mathscr H_m^-\) and commutes with the
natural \(S_d\)-action.  Since
\(
\mathscr H_m^-
\cong
S^{(m,m)}
\)
is irreducible, Schur's lemma implies that its restriction to
\(\mathscr H_m^-\) is scalar.

Choose a partition
\[
\mathbb N_{2m-1}
=
\{a_1,b_1\}
\sqcup\cdots\sqcup
\{a_m,b_m\}
\]
and set
\[
f_0
=
\prod_{j=1}^m
(\partial_{a_j}^*-\partial_{b_j}^*)
Z_\varnothing.
\]
Then
\[
\Xi_0\Xi_1\cdots\Xi_{2m-1}f_0
=
(-1)^m f_0.
\]
Hence the scalar is \((-1)^m\), proving
\eqref{eq:even-qbn-complement}.

Now let \(d=2m+1\).  Both
\[
\Xi_0\Xi_1\cdots\Xi_{2m}
\quad\text{and}\quad
A^+
\]
are \(S_d\)-equivariant isomorphisms from
\(\mathscr H_m^-\) onto \(\mathscr H_{m+1}^+\).
Since
\(
\mathscr H_m^-
\cong
S^{(m+1,m)}
\)
is irreducible, the two maps differ by a scalar.

Choose mutually disjoint pairs
\[
\{a_1,b_1\},\ldots,\{a_m,b_m\},
\]
and let \(c\) be the remaining coordinate. Then
we have
\[
A^+f_0
=
\partial_c^*f_0.
\]
On the other hand,
\[
\Xi_0\Xi_1\cdots\Xi_{2m}f_0
=
(-1)^m\partial_c^*f_0.
\]
Therefore
\[
\Xi_0\Xi_1\cdots\Xi_{2m}f
=
(-1)^mA^+f
\]
for every \(f\in\mathscr H_m^-\), proving
\eqref{eq:odd-qbn-complement}.

Finally, by \eqref{eq:qbn-total-coordinate-flip}, the left-hand side
of \eqref{eq:odd-qbn-complement} replaces each basis vector
\(Z_\sigma\) by \(Z_{\sigma^c}\).  Hence
\[
\operatorname{supp}(A^+f)
=
\{
\sigma^c:
\sigma\in\operatorname{supp}f
\}.
\]
\end{proof}

\begin{theorem}
\label{thm:maximal-hamming-extension}
If \(d=2m\), then every unit vector
\(
\Phi_0\in\mathcal H_{\mathrm{har}}^{\mathrm{even}}
\)
satisfies
\begin{equation}
P_t(\sigma^c\mid\Phi_0)
=
P_t(\sigma\mid\Phi_0),
\qquad
\sigma\in\Gamma_{d-1},\quad t\ge0.
\label{eq:even-antipodal-probability}
\end{equation}
Consequently,
\begin{equation}
\operatorname{diam}_H
\left(
\operatorname{supp}
P_t(\,\cdot\mid\Phi_0)
\right)
=
d,
\qquad
t\ge0,
\label{eq:even-maximal-diameter}
\end{equation}
and, for every \(\eta\in\Gamma_{d-1}\),
\begin{equation}
\sum_{\sigma\in\Gamma_{d-1}}
d_H(\eta,\sigma)
P_t(\sigma\mid\Phi_0)
=
\frac d2.
\label{eq:even-mean-hamming-distance}
\end{equation}

If \(d=2m+1\), let
\[
0\ne f\in\mathscr H_m^-,
\qquad
\mathcal M_m^{(\varepsilon)}x
=
\lambda x,
\qquad
x\ne0.
\]
Then
\begin{equation}
\operatorname{diam}_H
\left(
\operatorname{supp}
P_t\!\left(
\,\cdot\mid
\Psi_f(x)
\right)
\right)
=
d,
\qquad
t\ge0.
\label{eq:odd-maximal-diameter}
\end{equation}
\end{theorem}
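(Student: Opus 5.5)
For $d=2m$, I would first reduce to $t=0$. For $d=2m+1$, I would reduce to a pair of antipodal vertices in the support. In both cases the key input is Lemma~\ref{lem:harmonic-complement}, which relates the values of $f$ and $A^+f$ at complementary vertices.

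\emph{Even case.} By Theorem~\ref{thm:even-harmonic-subspace}, $\Phi_0=\Phi_f^{(\varepsilon)}$ for some $f\in\mathscr H_m^-$, and $\Phi_0$ is fixed. Hence $P_t=P_0$, and it suffices to prove \eqref{eq:even-antipodal-probability} at $t=0$. Writing $f=\sum_\sigma f(\sigma)Z_\sigma$, Lemma~\ref{lem:harmonic-complement} gives $f(\sigma^c)=(-1)^mf(\sigma)$ for $\sigma\in\Omega_m$. I would then read off $P_0$ layer by layer from \eqref{eq:three-layer-state}; all phases have unit modulus.
\begin{itemize}
\item[$\bullet$] For $\sigma\in\Omega_m$: the coin components of $(\mathsf L^-+\omega_\varepsilon\mathsf L^+)f$ are $f(\sigma)$ for $k\in\sigma$ and $\omega_\varepsilon f(\sigma)$ for $k\notin\sigma$. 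So $P_0(\sigma)=d|f(\sigma)|^2$, which is invariant under $\sigma\mapsto\sigma^c$.
\item[$\bullet$] For $\sigma\in\Omega_{m-1}$: $P_0(\sigma)=\sum_{k\notin\sigma}|f(\sigma\cup\{k\})|^2$.
\item[$\bullet$] At $\sigma^c\in\Omega_{m+1}$: $P_0(\sigma^c)=\sum_{k\in\sigma^c}|f(\sigma^c\setminus\{k\})|^2$. Since $(\sigma^c\setminus\{k\})^c=\sigma\cup\{k\}$, the lemma shows this equals the previous sum.
\end{itemize}
The vertices outside the three central layers carry zero mass, as do their complements.

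The two consequences then follow directly. The support is nonempty and closed under complementation, so it contains a pair $\sigma,\sigma^c$ with $d_H(\sigma,\sigma^c)=d$. For the mean distance, $d_H(\eta,\sigma)+d_H(\eta,\sigma^c)=d$ and $\sigma\ne\sigma^c$. Pairing each vertex with its complement and using equal masses therefore gives mean distance $d/2$.

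\emph{Odd case.} Since $\Psi_f(x)$ is an eigenvector of $\widetilde W_{\mathbf s}$ by \eqref{eq:odd-reduced-intertwining-vector}, we get $P_t=P_0$ for all $t$. It suffices to find $\sigma\in\Omega_m$ with $P_0(\sigma)>0$ and $P_0(\sigma^c)>0$.

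The key step, and the one I expect to be the main obstacle, is to show that $x_1\neq0$ and $x_6\neq0$ for every eigenvector. I would argue by contradiction from $\mathcal M_m^{(\varepsilon)}x=\lambda x$ with $|\lambda|=1$, using \eqref{eq:odd-reduced-map}.
\begin{itemize}
\item[$\bullet$] Suppose $x_1=0$. The equation $\lambda x_1=-\omega_\varepsilon^{m-1}x_0$ forces $x_0=0$.
\item[$\bullet$] Then $y_0=0$ and $y_4=\lambda x_4$. I would combine these with $y_2$, $y_5$, $y_3$, $y_6$ and $y_7$ to cascade $x_2=x_3=x_4=x_5=x_6=x_7=0$, using $m\ge1$. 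This contradicts $x\neq0$.
\item[$\bullet$] The case $x_6=0$ is symmetric: start from $y_6$ to get $x_7=0$, and cascade downward.
\end{itemize}
The danger is that the cascade needs a nondegenerate $2\times2$ subsystem, as in the injectivity proof of Theorem~\ref{thm:odd-harmonic-subspace}. I would check these small determinants carefully.

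With $x_1x_6\ne0$, choose $\sigma\in\operatorname{supp}f\subset\Omega_m$.
\begin{itemize}
\item[$\bullet$] At $\sigma$, the coin components with $k\in\sigma$ come only from $x_1\mathsf L^-f$. Hence $P_0(\sigma)\ge m|x_1|^2|f(\sigma)|^2>0$.
\item[$\bullet$] At $\sigma^c\in\Omega_{m+1}$, the coin components with $k\notin\sigma^c$ come only from $x_6\mathsf L^+g$. Hence $P_0(\sigma^c)\ge m|x_6|^2|g(\sigma^c)|^2$.
\item[$\bullet$] By Lemma~\ref{lem:harmonic-complement}, $|g(\sigma^c)|=|f(\sigma)|>0$, so $P_0(\sigma^c)>0$.
\end{itemize}
Thus $\sigma$ and $\sigma^c$ both lie in the support, giving diameter $d$ and proving \eqref{eq:odd-maximal-diameter}.
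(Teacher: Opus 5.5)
Your strategy is the paper's: the even case uses Lemma~\ref{lem:harmonic-complement} with the explicit three-layer form, and the odd case uses the lemma together with the eigenvector equations from \eqref{eq:odd-reduced-map}. Your odd-case bookkeeping is sharper than the paper's phrase that neither partner can disappear. You require $x_1\neq0$ and $x_6\neq0$, and you observe which terms can reach which coin directions:
\begin{itemize}
\item at $\sigma\in\Omega_m$, the directions $k\in\sigma$ receive only $x_1\mathsf L^-f$;
\item at $\sigma^c\in\Omega_{m+1}$, the directions $k\notin\sigma^c$ receive only $x_6\mathsf L^+g$.
\end{itemize}
This is exactly what excludes cancellation at a given vertex. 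The even case is complete as written.

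The gap is in the step you flagged. The cascade does not close from $m\ge1$ and $|\lambda|=1$ alone. Suppose $x_1=0$.
\begin{itemize}
\item Then $x_0=0$, and the $y_0$-equation gives $x_3=-(m+1)x_2$.
\item The $y_4$-equation becomes $\lambda x_4=-\omega_\varepsilon^m x_2$, since $1-2(m+1)=-d$.
\item Combined with $\lambda x_2=-\omega_\varepsilon^{m+1}x_4$, this gives $(\lambda^2-\omega_\varepsilon^{2m+1})x_2=(\lambda^2+1)x_2=0$, because $\omega_\varepsilon^{d}=-1$.
\end{itemize}
So the relevant $2\times2$ determinant depends on $\lambda$, unlike the constant $m(m+2)$ in the injectivity proof. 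Its nonvanishing is a spectral fact: you must invoke \eqref{eq:no-fourth-root-spectrum}, i.e.\ $\pm i\notin\operatorname{Spec}(\mathcal M_m^{(\varepsilon)})$, which rests on $\chi_m(\pm i)>0$.

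With that input the cascade finishes as follows:
\begin{itemize}
\item $x_2=x_3=x_4=0$;
\item $x_5=0$ from $y_5$;
\item $x_6=0$ from $y_3$, which is where $m\ge1$ is used;
\item $x_7=0$ from $y_6$.
\end{itemize}
The case $x_6=0$ is analogous. It gives $x_7=0$ and $x_4=-(m+1)x_5$, and the $y_3$ and $y_5$ equations then reduce to $(\lambda^2+1)x_5=0$. Finishing requires $y_2$, $y_4$ (using $m\ge1$) and $y_1$. This exclusion of $\lambda^2=-1$ is precisely what the paper's proof cites. Once you add it, your argument is complete.
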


\begin{proof}
Let first \(d=2m\).  By
Lemma~\ref{lem:harmonic-complement},
\[
\Xi_0\Xi_1\cdots\Xi_{2m-1}f
=
(-1)^m f,
\qquad
f\in\mathscr H_m^-.
\]
Since the product of all coordinate shifts sends each hypercube vertex
\(\sigma\) to its antipode \(\sigma^c\), the three-layer construction
\eqref{eq:three-layer-state} gives
\eqref{eq:even-antipodal-probability}.
If
\(
\sigma\in
\operatorname{supp}
P_t(\,\cdot\mid\Phi_0),
\)
then \(\sigma^c\) also belongs to the support.  Since
\(
d_H(\sigma,\sigma^c)=d,
\)
we obtain
\eqref{eq:even-maximal-diameter}.
Moreover,
\(
d_H(\eta,\sigma)
+
d_H(\eta,\sigma^c)
=
d.
\)
Combining this identity with the antipodal symmetry and normalization
gives
\eqref{eq:even-mean-hamming-distance}.

Now let \(d=2m+1\), and set
\(
g:=A^+f.
\)
By Lemma~\ref{lem:harmonic-complement},
\[
\operatorname{supp}g
=
\{
\sigma^c:
\sigma\in\operatorname{supp}f
\}.
\]
Let
\(
\mathcal M_m^{(\varepsilon)}x
=
\lambda x.
\)
The reduced equations \eqref{eq:odd-reduced-map} show that neither of
the two harmonic partners \(f\) and \(g\) can disappear from
\(\Psi_f(x)\).  Indeed, if either partner were absent, the
corresponding coefficient equations would imply
\(
\lambda^2=-1,
\)
contrary to \eqref{eq:no-fourth-root-spectrum}.
Hence the position support of \(\Psi_f(x)\) contains a complementary
pair
\(
\sigma,\sigma^c.
\)
Since
\(
d_H(\sigma,\sigma^c)=d,
\)
the position support has Hamming diameter \(d\) at \(t=0\).
Finally,
\[
\widetilde W_{\mathbf s}^{\,t}\Psi_f(x)
=
\lambda^t\Psi_f(x),
\]
so the position support is unchanged by the evolution.  Hence
\eqref{eq:odd-maximal-diameter} holds for every \(t\ge0\).
\end{proof}

%%%%%%%%%%%%%%%%%%%%%%%%%%%%%%%%%%%%%%%%%%%%%%%%%%%%%%%%%%%%%%%%%%%
\subsection{Sparse transversal harmonic states}
\label{sec:transversal-states}
%%%%%%%%%%%%%%%%%%%%%%%%%%%%%%%%%%%%%%%%%%%%%%%%%%%%%%%%%%%%%%%%%%%

Let
\[
d=2m+\delta,
\qquad
\delta\in\{0,1\},
\]
and choose mutually disjoint pairs
\[
\{a_1,b_1\},\ldots,\{a_m,b_m\}.
\]

For
\[
f
=
\sum_{\sigma\in\Omega_m}
f_\sigma Z_\sigma
\in V_m,
\qquad
f_\sigma
=
\langle f,Z_\sigma\rangle,
\]
we write
\[
\operatorname{supp}f
:=
\{\sigma\in\Omega_m:f_\sigma\ne0\}.
\]

Define
\[
f^{\mathrm{tr}}
:=
\prod_{j=1}^{m}
(\partial_{a_j}^*-\partial_{b_j}^*)
Z_\varnothing.
\]
Then
\[
A^-f^{\mathrm{tr}}=0,
\qquad
|\operatorname{supp}f^{\mathrm{tr}}|
=
2^m.
\]
In odd dimension, if \(c\) denotes the unpaired coordinate, then
\[
A^+f^{\mathrm{tr}}
=
\partial_c^*f^{\mathrm{tr}}.
\]

The QBN harmonic kernel also has a Johnson-graph interpretation.  On
\(V_m\),
\[
A^+A^-
=
mI+A_{J(d,m)},
\]
where \(A_{J(d,m)}\) is the adjacency operator of the Johnson graph
\(J(d,m)\).  Hence
\[
\mathscr H_m^-
=
\ker\bigl(A_{J(d,m)}+mI\bigr).
\]
The classical minimum-support bound gives
\[
|\operatorname{supp}f|
\ge
2^m,
\qquad
0\ne f\in\mathscr H_m^-,
\]
and the transversal vector \(f^{\mathrm{tr}}\) attains equality
\cite{Hwang1986,VorobevMogilnykhValyuzhenich2018}.

\begin{corollary}
\label{cor:transversal-sparsity}
Assume \eqref{eq:even-supporting-spin} when \(d=2m\), and
\eqref{eq:odd-supporting-spin} when \(d=2m+1\).

If \(d=2m\), then
\[
\frac{
\left|
\operatorname{supp}
P_t\!\left(
\,\cdot\mid
\Phi_{f^{\mathrm{tr}}}^{(\varepsilon)}
\right)
\right|
}{
2^{2m}
}
=
(m+1)2^{-m},
\qquad
t\ge0.
\]

If \(d=2m+1\) and
\[
\mathcal M_m^{(\varepsilon)}x
=
\lambda x,
\qquad
x\ne0,
\]
then
\[
\frac{
\left|
\operatorname{supp}
P_t\!\left(
\,\cdot\mid
\Psi_{f^{\mathrm{tr}}}(x)
\right)
\right|
}{
2^{2m+1}
}
\le
(m+1)2^{-m},
\qquad
t\ge0.
\]
In both cases the position support has Hamming diameter \(d\).
\end{corollary}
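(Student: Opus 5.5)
The proof is a support count based on the explicit form of the lifts. For \(f=f^{\mathrm{tr}}\) we know \(|\operatorname{supp}f|=2^m\). We first record the supports of the lifted components. The support of \(\mathsf L^\pm f\) is contained in \(\operatorname{supp}f\subset\Omega_m\). The support of \(\mathsf L_{\downarrow}f\) is \(\{\sigma\setminus\{k\}:\sigma\in\operatorname{supp}f,\ k\in\sigma\}\), and that of \(\mathsf L_{\uparrow}f\) is \(\{\sigma\cup\{k\}\}\) with \(k\notin\sigma\). We then count these neighbours explicitly.

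\emph{Structure of the transversal support.} Every \(\sigma\in\operatorname{supp}f^{\mathrm{tr}}\) picks exactly one element from each pair \(\{a_j,b_j\}\) (and, if \(d\) is odd, omits \(c\)). Removing an element gives a set that misses exactly one pair entirely. Such a set is determined by the missing pair index and the choices on the other \(m-1\) pairs, so there are \(m2^{m-1}\) of them, and all actually occur.

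In the even case, adding an element gives a set that contains one pair fully. There are likewise \(m2^{m-1}\) of these, and all occur.

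\emph{Even case.} The state \(\Phi^{(\varepsilon)}_{f^{\mathrm{tr}}}\) is fixed by Theorem~\ref{thm:even-harmonic-subspace}, so \(P_t=P_0\) for all \(t\). We check that each candidate vertex really carries nonzero probability.

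\begin{itemize}
\item On \(\Omega_m\), the component \(\mathsf L^-f\) has coin entries \(f_\sigma\) at \(k\in\sigma\), which are nonzero.
\item On \(\Omega_{m-1}\), at \(\tau=\sigma\setminus\{a_j\}\) the coefficient \(\langle \partial_{a_j}f,Z_\tau\rangle=f_\sigma\ne0\) appears in coin direction \(a_j\).
\item On \(\Omega_{m+1}\), the argument is symmetric.
\end{itemize}

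The support size is therefore \(2^m+2\cdot m2^{m-1}=(m+1)2^m\). Dividing by \(2^{2m}\) gives \((m+1)2^{-m}\).

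\emph{Odd case.} Here \(\Psi_{f^{\mathrm{tr}}}(x)\) is an eigenvector, so its position support is independent of \(t\). We only need an upper bound, so it suffices to list all vertices that the eight components can touch:

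\begin{itemize}
\item \(\operatorname{supp}\mathsf L_\downarrow f\subset\Omega_{m-1}\), of size \(m2^{m-1}\);
\item \(\operatorname{supp}f\cup\operatorname{supp}\mathsf L_\downarrow g\subset\Omega_m\);
\item \(\operatorname{supp}g\cup\operatorname{supp}\mathsf L_\uparrow f\subset\Omega_{m+1}\);
\item \(\operatorname{supp}\mathsf L_\uparrow g\subset\Omega_{m+2}\).
\end{itemize}

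We have \(g=\partial_c^*f\), so \(\operatorname{supp}g=\{\sigma\cup\{c\}\}\) has size \(2^m\).

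On \(\Omega_{m-1}\) we get \(m2^{m-1}\) vertices. By the complement symmetry of Lemma~\ref{lem:harmonic-complement}, \(\Omega_{m+2}\) gives \(m2^{m-1}\) as well.

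For \(\Omega_m\), the set \(\mathsf L_\downarrow g\) consists of \(\operatorname{supp}f\) (remove \(c\)) together with sets that contain \(c\) and miss one pair, which adds \(m2^{m-1}\) vertices. For \(\Omega_{m+1}\), the set \(\mathsf L_\uparrow f\) consists of \(\operatorname{supp}g\) (add \(c\)) together with sets that omit \(c\) and contain one pair fully, which again adds \(m2^{m-1}\).

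The total is at most
\[
2m2^{m-1}+2(2^m+m2^{m-1})=2^{m+1}(m+1).
\]
Dividing by \(2^{2m+1}\) gives the bound \((m+1)2^{-m}\).

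\emph{Diameter.} The diameter claim follows directly from Theorem~\ref{thm:maximal-hamming-extension}.

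The main obstacle is the exact count in the even case. There we must make sure that no cancellation occurs when two different \(\sigma\) feed the same neighbour. This cannot happen here: a set \(\tau\) missing one pair \(j\) has exactly two parents, \(\tau\cup\{a_j\}\) and \(\tau\cup\{b_j\}\), and these contribute in the distinct coin directions \(a_j\) and \(b_j\). The probability at \(\tau\) is therefore a sum of moduli squared, with no cancellation.
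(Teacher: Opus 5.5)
Your proposal is correct and follows the same route as the paper. You count, layer by layer, the vertices reached by the lifts of \(f^{\mathrm{tr}}\) (and of \(g=\partial_c^*f^{\mathrm{tr}}\) in odd dimension), obtaining \(2^m+2m2^{m-1}\) vertices in the even case and at most \(2m2^{m-1}+2(2^m+m2^{m-1})\) in the odd case, and you take the diameter claim from Theorem~\ref{thm:maximal-hamming-extension}; your extra observation that each coin entry of \(\mathsf L_{\downarrow}f^{\mathrm{tr}}\) and \(\mathsf L_{\uparrow}f^{\mathrm{tr}}\) is a single coefficient \(\pm1\), so no cancellation can occur, supplies the justification for exact equality in the even case that the paper leaves implicit.
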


Indeed, in even dimension the three-layer lift contains
\[
2^m+2m2^{m-1}
=
(m+1)2^m
\]
position vertices.  In odd dimension the four supporting layers
contain at most
\[
m2^{m-1},\qquad
2^m+m2^{m-1},\qquad
2^m+m2^{m-1},\qquad
m2^{m-1}
\]
possible nonzero vertices, whose total is
\[
2^{m+1}(m+1).
\]
The Hamming-diameter statement follows from
Theorem~\ref{thm:maximal-hamming-extension}.

Thus the transversal family realizes the extreme combination
\[
\frac{|\operatorname{supp}P_t|}{2^d}
\longrightarrow0
\]
at an exponential rate, while
\[
\operatorname{diam}_H(\operatorname{supp}P_t)=d.
\]
In particular, strong participation sparsity is compatible with
macroscopic extension in Hamming distance.
\section{Conclusion}
\label{sec:conclusion}

In this paper, we have developed a QBN harmonic reduction for
spin-coupled Grover walks on hypercubes under the assumption that the
spin variables take a common value on the central supporting Hamming
layers, and obtained a parity-dependent description of the resulting
invariant dynamics.
For \(d=2m\), the central harmonic kernel generates a three-layer
fixed-point subspace of dimension \(C_m\), and for \(m\ge2\) this
subspace exhausts all fixed points supported on the three central
Hamming layers.  For \(d=2m+1\), the two central harmonic spaces are
connected by the collective creation operator, and their canonical
QBN lifts generate a four-layer invariant subspace of the form
\[
\mathcal H_{\mathrm{har}}^{\mathrm{odd}}
\cong
\mathscr H_m^-\otimes\mathbb C^8.
\]
Consequently, the nontrivial part of the dynamics on this invariant
subspace is given by an explicit eight-dimensional reduced matrix.

In odd dimension, the central harmonic kernel is identified with the
Specht module \(S^{(m+1,m)}\), and the resulting four-layer harmonic
subspace is precisely the full corresponding isotypic component.
Moreover, this representation occurs with multiplicity \(8\), exactly
matching the dimension of the reduced dynamical factor.

As consequences of this reduction, we also obtain explicit long-time
probability behavior and show that harmonic states can remain strongly
nonuniform and occupy an asymptotically vanishing fraction of the
hypercube while still having maximal Hamming diameter.

These results provide a systematic QBN description of the central
invariant dynamics of spin-coupled Grover walks in arbitrary dimension
and suggest two natural directions for further study: extending the
harmonic reduction to more general spin configurations, and developing
a complete spectral classification of the full walk, including the
structure and interaction of the remaining invariant spectral
subspaces.

\section*{Author declarations}

\subsection*{Conflict of interest}

The author has no conflicts to disclose.

\subsection*{Data Availability}

Data sharing is not applicable to this article as no new data were created or analyzed in this study.

\end{document}